\newcommand{\BeginMyEnumerate}{\begin{enumerate}\setlength{\itemsep}{-\parskip}}
\newcommand{\EndMyEnumerate}{\end{enumerate}}
\newcommand{\myenumerate}[1]{\BeginMyEnumerate #1 \EndMyEnumerate}
\newcommand{\mypara}[1]{\vspace{10pt} \noindent \textbf{\sffamily #1}}
\newcommand{\smallpara}[1]{\vspace{10pt} \noindent \emph{#1}}
\renewcommand{\le}{\leqslant}
\renewcommand{\ge}{\geqslant}
\renewcommand{\leq}{\leqslant}
\renewcommand{\geq}{\geqslant}
\newcommand{\eps}{\varepsilon}
\newcommand{\Reals}{{\Bbb R}}
\newcommand{\Nats}{{\Bbb N}}
\newcommand{\etal}{\emph{et al.\xspace}}
\newcommand{\Opt}{\mbox{\sc Opt}}
\newcommand{\LB}{\mbox{\sc lb}}
\newcommand{\runtime}{O}
\newcommand{\ds}{\mathcal{D}}
\newcommand{\tree}{\mathcal{T}}
\newcommand{\q}[1]{``#1''}
\newcommand{\pointset}{S}
\newcommand{\query}{Q}
\newcommand{\qrange}{Q}
\newcommand{\Sq}{\pointset_{\qrange}}
\newcommand{\qinterval}{\qrange:=[x,x']}
\newcommand{\setsize}{n}
\newcommand{\metric}{L}
\newcommand{\infmetric}{\metric_\infty}
\newcommand{\Lp}{\metric_p}
\newcommand{\point}{p}
\newcommand{\infdist}{d_\infty}
\newcommand{\maxlen}{L}
\newcommand{\corner}{c}
\newcommand{\region}{A}
\newcommand{\qpoint}{u}
\newcommand{\vor}{Vor}
\newcommand{\allclusterings}{\mathrm{Part}}
\newcommand{\costagg}{{\Phi}}   
\newcommand{\clustering}{\mathcal{C}}
\newcommand{\C}{\clustering}        
\newcommand{\clusteringopt}{\clustering_{\text{opt}}}
\newcommand{\Copt}{\clustering_{\text{opt}}}
\DeclareMathOperator*{\argmin}{arg\,min}
\newcommand{\ClusterQuery}{\mbox{{\sc ClusterQuery}}\xspace}
\newcommand{\CapacitatedClusterQuery}{\mbox{{\sc CapacitatedKCenterQuery}}\xspace}
\newcommand{\ssc}{\mbox{{\sc SingleShotClustering}}\xspace}
\newcommand{\pckc}{\mbox{{\sc 0/1-WeightedKCenter}}\xspace}
\newcommand{\DeltaSample}{\mbox{{\sc DeltaApprox}}\xspace}
\newcommand{\Decider}{\mbox{{\sc Decider}}\xspace}
\newcommand{\B}{\mathcal{B}}
\newcommand{\Z}{\mathcal{Z}}
\newcommand{\ranges}{\Z}
\newcommand{\range}{\sigma}
\newcommand{\diam}{\mathrm{diam}}
\newcommand{\radius}{\mathrm{radius}}
\newcommand{\size}{\mathrm{size}}
\newcommand{\Binner}{\B_\text{inner}}
\newcommand{\Bleaf}{\B_\text{leaf}}
\newcommand{\sample}{A}
\newcommand{\rpacking}{R}
\newcommand{\Aq}{\sample_{\qrange}}
\newcommand{\Rq}{\rpacking_{\qrange}}
\newcommand{\deltaq}{\delta_{\qrange}}
\newcommand{\polylog}{{\rm polylog}}
\newcommand{\per}{\mathrm{per}}
\newcommand{\splitset}{\mbox{\rm Split}(\qrange)}
\DeclareMathOperator*{\argmax}{{arg\,max}\xspace}
\definecolor{mypink}{rgb}{0.858, 0.188, 0.478}
\patchcmd{\@footnotemark}{\hyper@linkstart{link}}{\hyper@linkstart{footnote}}{}{}
\begin{document}

\title{Range-Clustering Queries\thanks{MA is partly supported by Mikkel Thorup's Advanced Grant from the Danish Council
for Independent Research under the Sapere Aude research career programme. MdB, KB, MM,
and AM are supported by NWO grants 024.002.003, 612.001.207, 022.005.025, and 612.001.118
respectively.}}
\titlerunning{Range-Clustering Queries}

\author{Mikkel Abrahamsen\inst{1}, Mark de Berg\inst{2}, Kevin Buchin\inst{2}, Mehran Mehr\inst{2}, \and Ali~D.~Mehrabi\inst{2}}
\institute{Computer Science Department, University of Copenhagen, Denmark
           \and Computer Science Department, TU Eindhoven, the Netherlands}

\maketitle
\begin{abstract}
In a geometric $k$-clustering problem the goal is to partition a set of points in
$\Reals^d$ into $k$ subsets such that a certain cost function of the clustering
is minimized.  We present data structures for orthogonal \emph{range-clustering queries}
on a point set $\pointset$: given a query box~$\qrange$ and an integer~$k\geq 2$,
compute an optimal $k$-clustering for $\pointset\cap Q$.
We obtain the following results.
\begin{itemize}
\item We present a general method to compute a $(1+\eps)$-approximation to a range-clustering
              query, where $\eps>0$ is a parameter that can be specified as part of the query. Our method
              applies to a large class of clustering problems, including $k$-center clustering in any $\Lp$-metric
              and a variant of $k$-center clustering where the goal is to minimize the sum (instead of maximum)
             of the cluster sizes.
\item We extend our method to deal with capacitated $k$-clustering problems, where each of the
              clusters should not contain more than a given number of points.
\item For the special cases of rectilinear $k$-center clustering in $\Reals^1$, and in $\Reals^2$
              for $k=2$ or~3, we present data structures that answer range-clustering queries exactly.
\end{itemize}
\end{abstract}

\section{Introduction}\label{sec:intro}
The range-searching problem is one of the most important and widely studied problems
in computational geometry. In the standard setting one is given a set $\pointset$ of
points in $\Reals^d$, and a query asks to report or count all points inside
a geometric query range~$\qrange$. In many applications, however, one would like to perform
further analysis on the set $\pointset \cap \qrange$, to obtain more
information about its structure. Currently one then has to
proceed as follows: first perform a range-reporting query to explicitly
report~$\pointset \cap \qrange$, then apply a suitable analysis algorithm
to $\pointset \cap \qrange$. This two-stage process can be quite costly,
because algorithms for analyzing geometric data sets can be slow
and $\pointset\cap\qrange$ can be large. To avoid this we would need
data structures for what we call \emph{range-analysis queries}, which
directly compute the desired structural information about~$\pointset\cap\qrange$.
In this paper we develop such data structures for the case where one is
interested in a cluster-analysis of~$\pointset\cap\qrange$.

Clustering is a fundamental task in data analysis. It involves partitioning
a given data set into subsets called \emph{clusters}, such that
similar elements end up in the same cluster. Often the data
elements can be viewed as points in a geometric space, and similarity
is measured by considering the distance between the points.
We focus on clustering problems of the following type. Let $\pointset$
be a set of $n$~points in $\Reals^d$, and let $k\geq 2$ be a natural number.
A \emph{$k$-clustering} of $\pointset$ is a partitioning $\C$
of $\pointset$ into at most $k$ clusters. Let $\costagg(\C)$
denote the \emph{cost} of~$\C$. The goal is now to find
a clustering $\clustering$ that minimizes~$\costagg(\C)$. Many
well-known geometric clustering problems are of this type.
Among them is the $k$-center problem.  
In the \emph{Euclidean $k$-center problem} $\costagg(\C)$
is the maximum cost of any of the clusters $C\in\C$, where the cost of~$C$
is the radius of its smallest enclosing ball.
Hence, in the Euclidean $k$-center problem we want to cover
the point set $\pointset$ by $k$ congruent balls of minimum radius.
The \emph{rectilinear $k$-center problem} is defined similarly except that
one considers the $\infmetric$-metric; thus we want to cover~$\pointset$
by $k$ congruent axis-aligned cubes\footnote{Throughout the paper,
when we speak of cubes (or squares, or rectangles, or boxes) we always mean axis-aligned cubes
(or squares, or rectangles, or boxes).}
of minimum size.
The $k$-center problem, including the important special case of the
2-center problem, has been studied extensively, both for the Euclidean case~(e.g.~\cite{abs-2cptd-13,c-mptwa-99,e-fcptc-97,hcl-gss-93,h-slt-05,s-nltp2c-97})
and for the rectilinear case~(e.g.~\cite{c-garot-99,sw-rppp-96}).
%
%

All papers mentioned above---in fact, all papers on clustering that we
know of---consider clustering in the single-shot version.
We are the first to study \emph{range-clustering queries} on a point set~$\pointset$: given
a query range~$\qrange$ and a parameter~$k$, solve the given
$k$-clustering problem on $\pointset\cap\qrange$.  We study this
problem for the case where the query range is an axis-aligned box.

\mypara{Background.}
Range-analysis queries can be seen as a very general form
of range-aggregate queries. In a range-aggregate query, the goal is to
compute some aggregate function $F(\pointset\cap\qrange)$ over the points
in the query range. The current state of the art typically deals with
simple aggregate functions of the following form:  each point $p\in \pointset$
has a \emph{weight}~$w(p)\in\Reals$, and
$F(\pointset\cap \qrange) := \bigoplus_{p\in \pointset\cap \qrange} w(p)$,
where~$\oplus$ is a semi-group operation.
Such aggregate functions are \emph{decomposable}, meaning that
$F(A\cap B)$ can be computed from $F(A)$ and $F(B)$, which makes them
easy to handle using existing data structures such as range trees.
%

Only some, mostly recent, papers describe data structures supporting
non-decomposable analysis tasks. Several deal with finding the closest pair inside a query
range (e.g.~\cite{acfs-pspd-13,dgks-rl1m-14,gjks-dsrae-14}).
However, the closest pair does not give information about the global shape or distribution
of $\pointset\cap \qrange$, which is what our queries are about. The recent works by
Brass~\etal~\cite{bkssv-cats-13} and by Arya~\etal~\cite{amp-socg-15}
are more related to our paper.
Brass~\etal~\cite{bkssv-cats-13} present data structures for finding extent measures,
such the width, area or perimeter of the convex hull of $\pointset\cap\qrange$,
or the smallest enclosing disk.
(Khare~\etal~\cite{kams-ibsed-14} improve the result on smallest-enclosing-disk queries.)
These measures are strictly speaking not decomposable, but they depend only on the convex hull
of $\pointset\cap\qrange$ and convex hulls are decomposable.
A related result is by Nekrich and Smid~\cite{ns-coreset-cccg10},
who present a data structure that returns an $\eps$-coreset inside a query range.
The measure studied by Arya~\etal~\cite{amp-socg-15}, namely the length
of the minimum spanning tree of $\pointset\cap\qrange$, cannot be computed form the convex
hull either: like our range-clustering queries, it requires more information
about the structure of the point set. Thus our paper continues
the direction set out by Arya~\etal, which is to design data structures
for more complicated analysis tasks on~$\pointset\cap\qrange$.
%

\mypara{Contributions.}
Our main result is a general method to answer \emph{approximate} orthogonal range-clustering
queries in $\Reals^d$. Here the query specifies (besides the query box~$\qrange$ and the number of
clusters $k$) a value $\eps>0$; the goal then is to compute a $k$-clustering~$\C$ of
$\pointset\cap\qrange$ with $\costagg(\C)\leq (1+\eps)\cdot \costagg(\Copt)$, where $\Copt$ is an
optimal clustering for~$\pointset\cap\qrange$. Our method works by computing a
sample~$R\subseteq\pointset\cap\qrange$
such that solving the problem on $R$ gives us the desired approximate solution.
We show that for a large class of cost functions $\costagg$ we can find such a sample
of size only $O(k(f(k)/\eps)^d)$, where $f(k)$ is a function that only depends on the
number of clusters. This is similar to the approach taken by Har-Peled and Mazumdar~\cite{hm-cc-stoc04},
who solve the (single-shot) approximate $k$-means and $k$-median problem efficiently
by generating a coreset of size $O((k/\eps^d)\cdot \log n)$.
A key step in our method is a procedure to efficiently compute a lower bound on
the value of an optimal solution within the query range. The class of clustering problems to which our method
applies includes the $k$-center problem in any $\Lp$-metric, variants of the
$k$-center problem where we want to minimize the sum (rather than maximum) of the
cluster radii, and the 2-dimensional problem where we want to minimize the
maximum or sum of the perimeters of the clusters. Our technique allows us,
for instance, to answer rectilinear $k$-center queries in the plane
in $O((1/\eps)\log n + 1/\eps^2)$ for $k=2$ or~3,
in $O((1/\eps)\log n + (1/\eps^2)\polylog(1/\eps))$ for $k=4$ or~5, and in
$O( (k/\eps) \log n  +  (k/\eps)^{O(\sqrt{k})})$ time for $k>3$.
We also show that for the rectilinear (or Euclidean) $k$-center problem,
our method can be extended to deal with the capacitated version of the problem.
In the capacitated version each cluster should not contain more
than~$\alpha\cdot(|\pointset\cap\qrange|/k)$ points, for a given ~$\alpha>1$.

In the second part of the paper we turn our attention to exact solutions to
range-clustering queries. Here we focus on rectilinear
\emph{$k$-center queries}---that is, range-clustering queries for the
rectilinear $k$-center problem---in $\Reals^1$ and~$\Reals^2$. We present
two linear-size data structures for queries in~$\Reals^1$; one has $O(k^2 \log^2 n)$
query time, the other has $O(3^k \log n)$ query time. For queries in $\Reals^2$
we present a data structure that answers 2-center queries
in $O(\log n)$ time, and one that answers 3-center queries in~$O(\log^2 n)$ time.
Both data structures use $O(n\log^{\eps} n)$ storage, where $\eps>0$ is an
arbitrary small (but fixed) constant.

\section{Approximate Range-Clustering Queries}\label{sec:approx}
In this section we present a general method to answer approximate range-clustering queries.
We start by defining the class of clustering problems to which it applies.
\medskip

Let $\pointset$ be a set of $n$ points in $\Reals^d$ and let $\allclusterings(\pointset)$
be the set of all partitions of $\pointset$. Let $\allclusterings_k(\pointset)$ be the
set of all partitions into at most $k$ subsets, that is, all $k$-clusterings of $\pointset$.
Let $\costagg:\allclusterings(S)\mapsto\Reals_{\ge0}$
be the cost function defining our clustering problem, and define
\[
\Opt_k(\pointset) := \min_{\C\in \allclusterings_k(S)} \costagg(\C)
\]
to be the minimum cost of any $k$-clustering. Thus the goal of a range-clustering
query with query range~$\query$ and parameter $k\geq 2$ is to compute
a clustering $\C\in\allclusterings_k(\Sq)$ such that
$\costagg(\C) = \Opt_k(\Sq)$, where $\Sq := \pointset\cap\query$. The method presented in this
section gives an approximate answer to such a query: for a given constant~$\eps>0$,
which can be specified as part of the query, the method will report a clustering
$\C\in\allclusterings_k(\Sq)$ with $\costagg(\C) \leq (1+\eps) \cdot \Opt_k(\Sq)$.

To define the class of clusterings to which our method applies, we will need the
concept of $r$-packings~\cite{h-11}. Actually, we will use a slightly weaker variant,
which we define as follows. Let $|pq|$ denote the Euclidean distance between two
points~$p$ and~$q$.
A subset $R\subseteq P$ of a point set $P$ is called a \emph{weak $r$-packing} for $P$,
for some $r>0$, if for any point $p\in P$ there exists a \emph{packing point} $q\in R$
such that $|pq|\leq r$. (The difference with standard $r$-packing is that we do not
require that $|qq'|>r$ for any two points $q,q'\in R$.)
The clustering problems to which our method applies are the ones whose cost
function is \emph{regular}, as defined next.
\begin{definition}\label{def:cost-function}
A cost function $\costagg:\allclusterings(\pointset)\mapsto\Reals_{\ge0}$
is called \emph{$(c,f(k))$-regular}, if there is constant $c$ and
function~$f:\Nats_{\geq 2}\mapsto\Reals_{\geq 0}$ such that the following holds.
\begin{itemize}
  \item For any clustering $\C\in\allclusterings(\pointset)$, we have
       \[
       \costagg(\C)\geq c\cdot\max_{C\in\C} \diam(C),
       \]
       where $\diam(C) = \max_{p,q\in C} |pq|$ denotes the Euclidean diameter of the cluster~$C$.
       We call this the \emph{diameter-sensitivity} property.
  \item For any subset $\pointset'\subseteq \pointset$, any weak $r$-packing $R$ of $\pointset'$,
        and any $k\geq 2$, we have that
        \[
        \Opt_k(R)\leq\Opt_k(\pointset')\leq\Opt_k(R)+r\cdot f(k).
        \]
        Moreover, given a $k$-clustering $\C\in\allclusterings_k(R)$ we can compute a $k$-clustering
        $\C^*\in\allclusterings_k(\pointset')$ with $\costagg(\C^*) \leq \costagg(\C)+r\cdot f(k)$ in time $T_{\text{expand}}(n,k)$.
        We call this the \emph{expansion} property.
\end{itemize}
\end{definition}
\mypara{Examples.}
Many clustering problems have regular cost functions, in particular when the cost function
is the aggregation---the sum, for instance, or the maximum---of the costs
of the individual clusters. Next we give some examples.

\smallpara{The $k$-center problem in any $\Lp$-metric.}
For a cluster~$C$, let $\radius_p(C)$ denote the radius of the minimum
enclosing ball of $C$ in the $\Lp$-metric. In the $\infmetric$,
for instance, $\radius_p(C)$ is half the edge length of a minimum enclosing
axis-aligned cube of $C$. Then the cost of a clustering~$\C$
for the $k$-center problem in the $\Lp$-metric is
$\costagg^{\max}_p(\C) =\max_{C\in \C} \radius_p(C)$.
One easily verifies that the cost function for the rectilinear $k$-center problem
is $(1/(2\sqrt{d}),1)$-regular, and for the Euclidean $k$-center problem
it is $(1/2,1)$-regular. Moreover, $T_{\text{expand}}(n,k) = O(k)$
for the $k$-center problem, since we just have so scale each ball
by adding $r$ to its radius.\footnote{This time bound
only accounts for reporting the set of cubes that define the clustering.
If we want to report the clusters explicitly, we need to add an $O(n)$ term.}
(In fact $\costagg^{\max}_p(\C)$ is regular for any~$p$.)

\smallpara{Min-sum variants of the $k$-center problem.}
In the $k$-center problem the goal is to minimize $\max_{C\in \C} \radius_p(C)$.
Instead we can also minimize $\costagg^{\text{sum}}_p(\C) := \sum_{C\in \C}\radius_p(C)$, the
sum of the cluster radii. Also these costs functions are regular; the only difference is
that the expansion property is now satisfied with $f(k)=k$, instead of with $f(k)=1$.
Another interesting variant is to minimize $\left( \sum_{C\in \C}\radius_2(C)^2 \right)^{1/2}$,
which is $(1/(2\sqrt{d}),\sqrt{k})$-regular.

\smallpara{Minimum perimeter $k$-clustering problems.}
For a cluster $C$ of points in $\Reals^2$, define $\per(C)$ to be the
length of the perimeter of the convex hull of~$C$.
In the minimum perimeter-sum clustering problem the goal is to compute a
$k$-clustering $\C$ such that
$\costagg_\text{per} := \sum_{C\in\C}\per(C)$  is minimized~\cite{crw-91}.
This cost function is $(2,2\pi k)$-regular.
Indeed, if we expand the polygons in a clustering~$\C$ of a weak $r$-packing $R$ by taking the Minkowski
sum with a disk of radius~$r$, then the resulting shapes cover all the points
in $\pointset$. Each perimeter increases by $2\pi r$ in this process.
To obtain a clustering, we then assign each point to the cluster of its closest
packing point, so $T_{\text{expand}}(n,k) = O(n\log n)$.

\smallpara{Non-regular costs functions.}
Even though many clustering problems have regular costs functions, not all clustering problems do.
For example, the $k$-means problem does not have a regular cost function. Minimizing the
the max or sum of the areas of the convex hulls of the clusters is not regular either.
\medskip

\mypara{Our data structure and query algorithm.}
We start with a high-level overview of our approach. Let $\pointset$ be the given point set on
which we want to answer range-clustering queries, and let $\query$ be the query range.
From now on we use $\Sq$ as a shorthand for~$\pointset\cap\query$.
We assume we have an algorithm $\ssc(P,k)$ that computes an optimal solution to the $k$-clustering
problem (for the given cost function~$\costagg$) on a given point set~$P$.
(Actually, it is good enough if $\ssc(P,k)$ gives a $(1+\eps)$-approximation.)
Our query algorithm proceeds as follows.

\begin{algorithm}
\caption{$\ClusterQuery(k,\qrange,\eps)$.}
\myenumerate{
\item \label{step:lb} Compute a lower bound $\LB$ on $\Opt_k(\Sq)$.
\item \label{step:net} Set $r := \eps \cdot \LB /{f(k)}$ and compute a weak $r$-packing $R$ on $\Sq$.
\item \label{step:slow} $\C := \ssc(R,k)$.
\item \label{step:expand} Expand $\C$ into a $k$-clustering $\C^*$ of cost at most $\costagg(\C)+r\cdot f(k)$ for $\Sq$.
\item \label{step:return} Return $\C^*$.
}
\end{algorithm}

Note that Step~\ref{step:expand} is possible because $\costagg$ is $(c,f(k))$-regular.
The following lemma is immediate.

\begin{lemma}\label{le:global-alg}
$\costagg(\C^*) \leq (1+\eps) \cdot \Opt_k(\Sq)$.
\end{lemma}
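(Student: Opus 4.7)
The plan is to chain together three facts that follow directly from the construction of the algorithm and the definition of a $(c,f(k))$-regular cost function. No new ideas are needed; as the authors say, the statement is immediate once one lines up the inequalities.

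First, I would apply the expansion property of Definition~\ref{def:cost-function} to the weak $r$-packing~$R$ of~$\Sq$ constructed in Step~\ref{step:net}. This gives the sandwiching inequalities
\[
\Opt_k(R) \leq \Opt_k(\Sq) \leq \Opt_k(R) + r\cdot f(k).
\]
In particular, since $\ssc(R,k)$ returns an optimal (or $(1+\eps)$-approximate) $k$-clustering of~$R$, we have $\costagg(\C) \leq \Opt_k(R) \leq \Opt_k(\Sq)$ (absorbing the $(1+\eps)$ factor of $\ssc$ into the final $\eps$ would only cost another low-order term, but the clean version assumes optimality).

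Second, Step~\ref{step:expand} turns $\C$ into a clustering $\C^*$ of $\Sq$ with
\[
\costagg(\C^*) \leq \costagg(\C) + r\cdot f(k),
\]
again by the expansion property. Combining this with the previous inequality gives
\[
\costagg(\C^*) \leq \Opt_k(\Sq) + r\cdot f(k).
\]

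Third, I would substitute the choice of $r$ from Step~\ref{step:net}: since $r := \eps\cdot \LB/f(k)$, we have $r\cdot f(k) = \eps\cdot \LB$. Using the fact that $\LB \leq \Opt_k(\Sq)$ (Step~\ref{step:lb}), we obtain $r\cdot f(k) \leq \eps\cdot \Opt_k(\Sq)$, and therefore
\[
\costagg(\C^*) \leq \Opt_k(\Sq) + \eps\cdot \Opt_k(\Sq) = (1+\eps)\cdot \Opt_k(\Sq),
\]
as claimed. There is no real obstacle in this proof: the whole design of the framework—defining regularity exactly so that the expansion error is additive and linear in $r$, and then scaling $r$ against a lower bound on the optimum—is engineered to make this telescoping argument go through in three lines.
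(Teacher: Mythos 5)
Your argument is correct and is precisely the chain of inequalities the authors have in mind when they call the lemma ``immediate'': the expansion property sandwiches $\Opt_k(R)\leq\Opt_k(\Sq)$, Step~\ref{step:expand} adds at most $r\cdot f(k)=\eps\cdot\LB\leq\eps\cdot\Opt_k(\Sq)$, and the bound follows. Nothing is missing, and your parenthetical remark about absorbing the approximation factor of $\ssc$ matches the paper's own aside.
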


Next we show how to perform
Step~\ref{step:lb} and~\ref{step:net}: we will describe a data structure that allows us
to compute a suitable lower bound~$\LB$ and a corresponding weak $r$-packing, such that the size of
the $r$-packing depends only on $\eps$ and $k$ but not on $|\Sq|$.
\medskip

Our lower bound and $r$-packing computations are based on so-called cube covers.
A \emph{cube cover} of~$\Sq$
is a collection $\B$ of interior-disjoint cubes
that together cover all the points in $\Sq$ and such that each $B\in\B$ contains at least
one point from ~$\Sq$ (in its interior or on its boundary).
Define the size of a cube~$B$, denoted by $\size(B)$, to be its edge length.
The following lemma follows immediately from the fact that the diameter of a cube~$B$ in $\Reals^d$
is $\sqrt{d}\cdot \size(B)$.

\begin{lemma}\label{lem:rnet}
Let $\B$ be a cube cover of $\Sq$ such that $\size(B)\leq r/\sqrt{d}$
for all $B\in\B$. Then any subset $R\subseteq \Sq$ containing a point from each cube $B\in\B$
is a weak $r$-packing for $\pointset$.
\end{lemma}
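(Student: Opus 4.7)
The plan is to unwind the definition of a weak $r$-packing and show directly that every point of $\Sq$ has a nearby packing point in $R$, using the diameter bound of a cube of bounded size.

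First I would take an arbitrary point $p \in \Sq$. Since $\B$ is a cube cover of $\Sq$, by definition there exists at least one cube $B \in \B$ that contains $p$ (either in its interior or on its boundary). Next, since $R$ contains a point from each cube of $\B$, I can pick some $q \in R \cap B$. Both $p$ and $q$ then lie in the same cube $B$, so their Euclidean distance is at most the diameter of $B$.

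The key computation is then the standard diameter formula for an axis-aligned cube in $\Reals^d$: a cube of side length $s$ has Euclidean diameter $\sqrt{d}\cdot s$. Applying this to $B$ and using the hypothesis $\size(B) \leq r/\sqrt{d}$, we get
\[
|pq| \;\leq\; \diam(B) \;=\; \sqrt{d}\cdot\size(B) \;\leq\; \sqrt{d}\cdot \frac{r}{\sqrt{d}} \;=\; r.
\]
Since $p$ was an arbitrary point of $\Sq$ and we exhibited a $q \in R$ with $|pq| \leq r$, the set $R$ satisfies the weak $r$-packing property for $\Sq$ (noting that the requirement $R \subseteq \Sq$ is given).

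There is no real obstacle here: the argument is a one-line consequence of the definition of a cube cover, the definition of weak $r$-packing (which omits the separation condition of standard $r$-packings), and the elementary bound on the diameter of an axis-aligned cube. The only small point to be careful about is to invoke the \emph{weak} version of the packing notion, because the subset $R$ chosen by picking one representative per cube need not satisfy any lower bound on pairwise distances.
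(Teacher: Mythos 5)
Your proof is correct and is exactly the argument the paper intends: the paper dismisses the lemma as following "immediately from the fact that the diameter of a cube $B$ in $\Reals^d$ is $\sqrt{d}\cdot\size(B)$," and your write-up simply spells out that one-line argument (pick the cube covering $p$, pick the representative $q\in R$ in that cube, bound $|pq|$ by the diameter). No gaps; your remark that only the \emph{weak} packing notion is attainable here is also apt.
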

Our next lemma shows we can find a lower bound on
$\Opt_k(\Sq)$ from a suitable cube cover.

\begin{lemma}\label{lem:optbound}
Suppose the cost function $\costagg$ is $(c,f(k))$-regular.
Let $\B$ be a cube cover of $\Sq$ such that $|\B|> k 2^d$.
Then $\Opt_k(\Sq) \geq c\cdot\min_{B\in\B}\size(B)$.
\end{lemma}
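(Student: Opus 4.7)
\medskip
\noindent\textbf{Proof proposal.} The plan is to argue by contradiction: assume $\Opt_k(\Sq) < c\cdot s$, where $s:=\min_{B\in\B}\size(B)$, and then use the diameter-sensitivity property together with a pigeonhole argument on $\B$ to derive a contradiction. More precisely, if $\Opt_k(\Sq) < c\cdot s$, then diameter-sensitivity forces every cluster in an optimal clustering $\Copt$ to have Euclidean diameter strictly less than $s$. Since the Euclidean diameter upper-bounds each side of the axis-aligned bounding box, every cluster of $\Copt$ lies inside an axis-aligned box whose sides are all shorter than $s$.

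Next I would pick one point from each cube of $\B$, obtaining more than $k\cdot 2^d$ points, each lying in a distinct cube of $\B$. Distributing these points over the $k$ clusters of $\Copt$ by pigeonhole produces a single cluster $C$ that contains points from more than $2^d$ distinct cubes of $\B$. Since the cubes of $\B$ are interior-disjoint and each has size at least $s$, the task reduces to the following geometric claim, which is the crux of the proof:

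\begin{quote}
If $B'$ is an axis-aligned box all of whose sides have length less than $s$, then at most $2^d$ pairwise interior-disjoint axis-aligned cubes of size at least $s$ can each contain a point of $B'$.
\end{quote}

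This claim is the main obstacle. To establish it, I would first shrink every such cube to a sub-cube of size exactly $s$ that still contains the relevant point of $B'$; a coordinate-by-coordinate check shows such a shrinking is always possible, and interior-disjointness is preserved. The minimum corner of any such sub-cube must then lie in the axis-aligned region $[B'_{\min}-s\mathbf{1},\, B'_{\max}]$, which is itself a box of side length less than $2s$ in every coordinate. Two size-$s$ sub-cubes are interior-disjoint exactly when their minimum corners differ by at least $s$ in some coordinate. Partitioning this corner-region into $2^d$ congruent sub-boxes (halving along each coordinate) gives sub-boxes of side strictly less than $s$, so any two minimum corners lying in the same sub-box would differ by less than $s$ in every coordinate, contradicting interior-disjointness. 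Hence the sub-boxes receive at most one corner each, and the bound $2^d$ follows.

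Applying the claim to the bounding box of the cluster $C$ contradicts the fact that $C$ meets more than $2^d$ interior-disjoint cubes of size $\geq s$. Therefore $\Opt_k(\Sq)\geq c\cdot s = c\cdot\min_{B\in\B}\size(B)$, as required.
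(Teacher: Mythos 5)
Your proof is correct, and its combinatorial core is genuinely different from the paper's. Both arguments reduce, via pigeonhole over the $k$ clusters, to showing that a cluster of Euclidean diameter less than $s:=\min_{B\in\B}\size(B)$ can meet at most $2^d$ cubes of the cover. The paper proves the contrapositive directly on the cubes: among $2^d+1$ cubes meeting one cluster it finds, via a characteristic-vector argument based on the ``$i$-below'' relation and the fact that disjoint cubes are separated by an axis-aligned hyperplane, a triple $B'\prec_i B\prec_i B''$, which immediately yields two cluster points at distance at least $\size(B)\geq s$. You instead normalize each cube to a size-$s$ sub-cube containing the witness point, observe that all their minimum corners land in a box of side less than $2s$, and pack: a $2^d$-cell halving grid can hold at most one corner per cell because interior-disjointness of size-$s$ cubes forces corners to differ by at least $s$ in some coordinate. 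Your route is arguably the more standard packing argument and each step (the shrinking, the corner localization, the disjointness criterion) is elementary and fully justified; the paper's route avoids the normalization step and delivers the marginally stronger conclusion $\costagg(\Copt)\geq c\cdot\size(B)$ for a specific middle cube $B$, though only the minimum over $\B$ is needed for the lemma. Either argument establishes the stated bound.
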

\begin{proof}
For two cubes $B$ and $B'$ such that the maximum $x_i$-coordinate of $B$ is at most the minimum
$x_i$-coordinate of~$B'$, we say that $B$ is \emph{$i$-below} $B'$ and $B'$ is \emph{$i$-above}~$B$.
We denote this relation by $B \prec_i B'$. Now consider an optimal $k$-clustering $\clusteringopt$
of~$\Sq$. By the pigeonhole principle, there is a cluster $C\in \clusteringopt$ containing
points from at least $2^d+1$ cubes. Let $\B_C$ be the set of cubes that contain at least one point in~$C$.

Clearly, if there are cubes $B,B',B''\in\B_C$ such that $B'\prec_i B\prec_i B''$
for some $1\leq i\leq d$, then the cluster~$C$ contains two points (namely from $B'$ and $B''$)
at distance at least $\size(B)$ from each other. Since $\costagg$ is $(c,f(k))$-regular this
implies that $\costagg(\clusteringopt)\geq c\cdot\size(B)$, which proves the lemma.

Now suppose for a contradiction that such a triple $B',B,B''$ does not exist.
Then we can define a characteristic vector $\Gamma(B) = (\Gamma_1(B),\ldots,\Gamma_d(B))$
for each cube $B\in\B_C$, as follows:
\[
\Gamma_i(B) = \left\{ \begin{array}{ll}
                    0 & \mbox{if no cube $B'\in\B_C$ is $i$-below $B$} \\
                    1 & \mbox{otherwise}
                    \end{array}
             \right.
\]

Since the number of distinct characteristic vectors is $2^d < |B_C|$, there must be
two cubes $B_1,B_2\in B_C$ with identical characteristic vectors. However, any two
interior-disjoint cubes can be separated by an axis-aligned hyperplane,
so there is at least one $i\in\{1,\ldots,d\}$ such that
$B_1\prec_i B_2$ or $B_2\prec_i B_1$.
Assume without loss of generality that $B_1\prec_i B_2$, so $\Gamma_i(B_2)=1$.
Since $\Gamma(B_1)=\Gamma(B_2)$ there must be a cube $B_3$ with $B_3\prec_i B_1$.
But then we have a triple $B_3\prec_i B_1\prec_i B_2$, which is a contradiction.
\end{proof}

Next we show how to efficiently perform Steps~\ref{step:lb} and~\ref{step:net}
of~$\ClusterQuery$. Our algorithm uses a compressed octree $\tree(\pointset)$
on the point set~$\pointset$, which we briefly describe next.

For an integer $s$, let $G_s$ denote the grid in $\Reals^d$ whose cells have size
$2^{s}$ and for which the origin $O$ is a grid point. A \emph{canonical cube} is
any cube that is a cell of a grid $G_s$, for some integer $s$.
A \emph{compressed octree} on a point set~$\pointset$ in $\Reals^d$ contained in
a canonical cube~$B$ is a tree-like structure defined recursively, as follows.
\begin{itemize}
\item If $|\pointset|\leq 1$, then $\tree(\pointset)$ consists of a single leaf node, which
corresponds to the cube~$B$.
\item If $|\pointset|> 1$, then consider the cubes $B_1,\ldots,B_{2^d}$ that result from
      cutting $B$ into $2^d$ equal-sized cubes.
    \begin{itemize}
    \item If at least two of the cubes~$U_i$ contain at least one point
        from~$\pointset$ then $\tree(\pointset)$
        consists of a root node with $2^d$ children $v_1,\ldots,v_{2^d}$,
        where $v_i$ is the root of a compressed octree for\footnote{Here we
        assume that points on the boundary between cubes
        are assigned to one of these cubes in a consistent manner.} $B_i\cap \pointset$.
    \item If all points from~$\pointset$ lie in the same cube~$B_i$, then let
        $B_\text{in}\subseteq B_i$ be the smallest canonical cube containing
        all points in~$\pointset$. Now $\tree(\pointset)$ consists of a root node with
        two children: one child~$v$ which is the root of a compressed octree for $\pointset$
        inside $B_\text{in}$, and one leaf node~$w$ which represents the
        donut region $B\setminus B_\text{in}$.
    \end{itemize}
\end{itemize}

A compressed octree for a set $\pointset$ of $n$ points can be computed in $O(n\log n)$
time, assuming a model of computation where the smallest canonical cube of two points
can be computed in~$O(1)$ time~\cite[Theorem~{2.23}]{h-11}.
For a node $v\in \tree(\pointset)$, we denote the cube or donut corresponding to~$v$
by $B_v$, and we define $\pointset_v := B_v \cap \pointset$.
It will be convenient to slightly modify the compressed quadtree by removing all
nodes~$v$ such that $\pointset_v=\emptyset$. (These nodes must be leaves.) Note that
this removes all nodes~$v$ such that $B_v$ is a donut. As a result, the parent of
such a donut node now has only one child,~$w$; we remove~$w$ and
link the parent of $w$ directly to $w$'s (non-empty) children. The modified
tree~$\tree(\pointset)$---with a slight abuse of terminology we still refer to
$\tree(\pointset)$ as a compressed octree---has the property that any internal
node has at least two children. We augment $\tree(\pointset)$ by
storing at each node~$v$ an arbitrary point~$p\in B_v\cap \pointset$.
\medskip

Our algorithm descends into~$\tree(\pointset)$ to
find a cube cover $\B$ of $\Sq$ consisting of canonical cubes,
such that $\B$ gives us a lower bound on $\Opt_k(\Sq)$.
In a second phase, the algorithm then refines the cubes in the cover
until they are small enough so that, if we select one point from each cube,
we get a weak $r$-packing of $\Sq$ for the appropriate value of~$r$.
The details are described in Algorithm~\ref{alg:net}, where we assume
for simplicity that $|\Sq|>1$.
(The case $|\Sq|\leq 1$ is easy to check and handle.
In addition, the algorithm will need several supporting data structures,
which we will describe them later.)

\begin{algorithm}
\caption{Algorithm for steps~\ref{step:lb} and~\ref{step:net} of~$\ClusterQuery$,
for a $(c,f(k))$-regular cost function.}
\label{alg:net}
\myenumerate{
\item $\Binner := B_{\text{root}(\tree(\pointset))}$ and $\Bleaf := \emptyset$.
\item $\rhd$ Phase~1: Compute a lower bound on $\Opt_k(\Sq)$.
\item {\bf While} {$|\Binner \cup \Bleaf| \leq k2^{2d}$ and $\Binner \neq \emptyset$} {\bf do} \label{line:while1}
        \myenumerate{
        \item[(i)] Remove a largest cube $B_v$ from $\Binner$. Let $v$ be the corresponding node.
        \item[(ii)] {\bf If} {$B_v \not\subseteq \qrange$} {\bf then} \label{line:split-start}
        \myenumerate{
            \item[(i)] Compute $\text{bb}(\Sq\cap B_v)$, the bounding box of $\Sq\cap B_v$.
            \item[(ii)] Find the deepest node $u$ such that $\text{bb}(\Sq\cap B_v) \subseteq B_{u}$ and set $v := u$.
            }
        \item[(iii)] {\bf EndIf}
        \item[(iv)] For each child $w$ of $v$ such that $B_w\cap\Sq \neq\emptyset$, insert $B_w$ into
        $\Binner$ if $w$ is an internal node and insert $B_w$ into $\Bleaf$ if $w$
        is a leaf node. \label{line:split-end}
        }
\item  {\bf EndWhile}
\item $\LB := c\cdot \max_{B_v\in \Binner} \size(B_v)$ \label{line:LB}.
\item $\rhd$ Phase~2: Compute a suitable weak $r$-packing.
\item $r := \eps\cdot \LB/f(k)$.
\item {\bf While} {$\Binner\neq \emptyset$} {\bf do} \label{line:loop2-start}
        \myenumerate{
            \item[(i)] Remove a cube $B_v$ from $\Binner$ and handle it as in
            lines~\ref{line:split-start}--\ref{line:split-end}, with
            the following change:
            if $\size(B_w) \leq r/\sqrt{d}$ then always insert $B_w$
            into $\Bleaf$ (not into $\Binner$).
            }
\item {\bf EndWhile} \label{line:loop2-end}
\item For each cube~$B_v\in \Bleaf$ pick a point in $\Sq\cap B_v$ and put it into~$\Rq$. \label{line:pick}
\item Return $\Rq$.
}
\end{algorithm}

Note that we continue the loop in lines~\ref{line:while1}--\ref{line:split-end}
until we collect $k 2^{2d}$ cubes (and not $k2^d$, as Lemma~\ref{lem:optbound}
would suggest) and that in line~\ref{line:LB} we take the maximum cube
size (instead of the minimum, as Lemma~\ref{lem:optbound} would suggest).

\begin{lemma}\label{le:correctness}
The value $\LB$ computed by Algorithm~\ref{alg:net} is a correct lower bound on
$\Opt_k(\Sq)$. In addition, the set $\Rq$ is a weak $r$-packing for $r = \eps\cdot\LB/f(k)$ of
size~$O(k (f(k)/(c\;\eps))^d)$.
\end{lemma}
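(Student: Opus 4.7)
The plan is to prove the two parts of the lemma separately: the lower bound $\LB \le \Opt_k(\Sq)$, and that $\Rq$ is a weak $r$-packing of the claimed size. Let $s := \max_{B \in \Binner} \size(B)$ at the end of Phase~1 and $F := \Binner \cup \Bleaf$ at that moment. The edge case $\Binner = \emptyset$ is trivial: every cube in $\Bleaf$ is then a leaf of the (modified) compressed octree containing a single point of $\pointset$, so $\Rq = \Sq$ and both statements hold with $\LB := 0$. So assume $\Binner \neq \emptyset$; the loop must then have terminated because $|F| > k 2^{2d}$.

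For the lower bound, the key observation is that the algorithm always removes the \emph{largest} cube from $\Binner$, so the max in $\Binner$ is monotonically non-increasing throughout Phase~1, and hence every cube actually split had size at least $s$. Consider the canonical grid $G_s$ of cells of side $s$, and let $\mathcal{G}$ be the cells meeting $\Sq$. The plan is to define a map $\phi\colon F \to \mathcal{G}$ and show that every cell has preimage of size at most $2^d$: a cube $B$ with $\size(B) \le s$ is sent to the unique $s$-cell containing it, and a cube $B \in \Bleaf$ with $\size(B) > s$ (necessarily a leaf of the octree holding a single point $p_B$) is sent to the $s$-cell containing $p_B$. For a fixed $g \in \mathcal{G}$, any cube $B \subseteq g$ in $F$ has a parent $P$ in the splitting tree with $\size(P) \ge s$, and since both $P$ and $g$ are canonical cubes, necessarily $P \supseteq g$. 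A short case analysis---either $P = g$ (so $B$ is one of the $\le 2^d$ children added when $g$ was split), or $P \supsetneq g$ and either $B = g$ itself or a bounding-box jump from $P$ landed at some $u \subsetneq g$ whose $\le 2^d$ children include $B$---shows that at most one split event contributes cubes inside $g$, and at most one large cube of $\Bleaf$ can have $g$ as its image (antichain plus the leaf's unique point). Hence $|\phi^{-1}(g)| \le 2^d$, giving $|F| \le 2^d |\mathcal{G}|$ and thus $|\mathcal{G}| > k 2^d$. Lemma~\ref{lem:optbound} applied to $\mathcal{G}$ (a cube cover of $\Sq$ all of whose cubes have size $s$) now yields $\Opt_k(\Sq) \ge c\,s = \LB$.

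For the weak $r$-packing property, each cube in the final $\Bleaf$ either has size at most $r/\sqrt{d}$ (guaranteed for cubes produced in Phase~2, and for sufficiently refined Phase-1 leaves) or contains exactly one point of $\Sq$ (a Phase-1 leaf cube that was never required to be refined). In the first case Lemma~\ref{lem:rnet} ensures the chosen representative lies within distance $r$ of any $p \in \Sq$ inside the cube; in the second, $p$ itself is the representative. For $|\Rq|$, Phase~2 refines each of the $|\Binner| = O(k 2^{2d})$ Phase-1 inner cubes (all of size $\le s$) down to size $\le r/\sqrt{d} = \eps c s / (f(k)\sqrt{d})$, producing at most $(s/(r/\sqrt{d}))^d = O((f(k)/(c\eps))^d)$ leaves per initial cube. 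Together with the $O(k)$ pre-existing Phase-1 leaves this gives $|\Rq| = O\bigl(k\,(f(k)/(c\eps))^d\bigr)$.

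The main obstacle is establishing the inequality $|\phi^{-1}(g)| \le 2^d$ cleanly: this requires reasoning about the tree of splits induced by the algorithm in the presence of compressed-octree compression and bounding-box jumps, to argue that two distinct split events cannot both add cubes into the same $s$-cell and that the responsible split contributes at most $2^d$ cubes inside~$g$.
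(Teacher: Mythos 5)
Your strategy is sound and, for the lower-bound half, genuinely different from the paper's. The paper also starts from the invariant that $\Binner\cup\Bleaf$ is a collection of interior-disjoint cubes covering $\Sq$, but it then builds an auxiliary cube cover $\B$ by replacing each cube with itself or its octree parent (depending on a comparison among parents), shows $|\B|\geq|\Binner\cup\Bleaf|/2^d>k2^d$, and separately proves $\min_{B\in\B}\size(B)\geq s$ by a contradiction argument that must wrestle with degree-1 (donut) nodes of the compressed octree. You instead charge the cubes of $F$ to the cells of the canonical grid $G_s$ and bound each cell's preimage by $2^d$; since every cell of $G_s$ has size exactly $s$, Lemma~\ref{lem:optbound} applies directly and no min-size argument is needed. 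This is arguably cleaner, and it rests on the same two facts the paper exploits: the loop terminates with more than $k2^{2d}$ interior-disjoint cubes, and the priority queue guarantees that every cube ever removed from $\Binner$ in Phase~1 has size at least $s$. One caution on the step you yourself flag as the main obstacle: your case enumeration for $|\phi^{-1}(g)|\le 2^d$ is not exhaustive, because when $P\supsetneq g$ corresponds to a compressed node, a split \emph{without} any bounding-box jump can also deposit children strictly inside $g$ (the surviving child can be a small $B_{\text{in}}$, or the children of a contracted donut's sibling). The clean way to close this is the disjointness argument your sketch gestures at: if two cubes of $F$ lying inside $g$ were inserted at different iterations, then at the later iteration some $P_2\supseteq g$ of size at least $s$ is removed from $\Binner$ while the earlier cube---still present, since nothing is ever deleted from $\Bleaf$ and a surviving $\Binner$-cube was never removed---has its interior inside $P_2$, contradicting the invariant; hence all cubes of $F$ inside $g$ stem from a single iteration and number at most $2^d$, and the at-most-one oversized leaf charged to $g$ excludes all other preimages. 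The weak-$r$-packing and size arguments in your second half coincide with the paper's.
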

\begin{proof}
As the first step to prove that $\LB$ is a correct lower bound, we claim that the
loop in lines~\ref{line:while1}--\ref{line:split-end} maintains the following invariant:
(i) $\bigcup (\Binner \cup \Bleaf)$ contains all points in $\Sq$, and
(ii) each $B\in\Binner$ contains at least two points from $\Sq$ and each
$B\in\Bleaf$ contains exactly one point from $\Sq$.
This is trivially true before the loop starts, under our assumption that $|\Sq|\geq 2$.
Now suppose we handle a cube $B_v\in\Binner$. If $B_v \subseteq \qrange$
then we insert the cubes $B_w$ of all children into $\Binner$ or $\Bleaf$,
which restores the invariant. If $B_v \not\subseteq \qrange$ then we first replace $v$ by~$u$.
The condition $\text{bb}(\Sq\cap B_v) \subseteq B_{u}$ guarantees that all points
of $\Sq$ in $B_v$ are also in $B_u$. Hence, if we then insert the cubes $B_w$ of $u$'s children
into $\Binner$ or $\Bleaf$, we restore the invariant. Thus at any time, and in particular
after the loop, the set $\Binner \cup \Bleaf$ is a cube cover of $\Sq$.

To complete the proof that $\LB$ is a correct lower bound we do not work with the set
$\Binner\cup\Bleaf$ directly, but  we work with a set $\B$ defined as follows.
For a cube $B_v\in\Binner\cup\Bleaf$, define $\text{parent}(B_v)$ to be the cube $B_u$
corresponding to the parent node~$u$ of~$v$. For each cube $B_v\in\Binner\cup\Bleaf$
we put one cube into $\B$, as follows. If there is another cube $B_w\in\Binner\cup\Bleaf$
such that $\text{parent}(B_w)\subsetneq\text{parent}(B_v)$, then we put $B_v$ itself into $\B$,
and otherwise we put $\text{parent}(B_v)$ into $\B$. Finally, we remove all duplicates from~$\B$.
Since $\Binner\cup\Bleaf$ is a cube cover for $\Sq$---that is, the cubes in
$\Binner\cup\Bleaf$ are disjoint and they cover all points in $\Sq$---the same is true for~$\B$.
Moreover, the only duplicates in $\B$ are cubes that are the parent of multiple nodes
in $\Binner\cup\Bleaf$, and so $|\B|\geq|\Binner\cup\Bleaf |/2^d > k2^d$.
By Lemma~\ref{lem:optbound} we have $\Opt_k(\Sq)\geq c\cdot\min_{B_v\in\B}\size(B_v)$.

It remains to argue that $\min_{B_v\in\B}\size(B_v)\geq\max_{B_v\in\Binner}\size(B_v)$.
We prove this by contradiction. Hence, we assume
$\min_{B_v\in\B}\size(B_v)<\max_{B_v\in\Binner}\size(B_v)$ and we
define $B:=\argmin_{B_v\in\B} \size(B_v)$ and $B':=\argmax_{B_v\in\Binner} \size(B_v)$.
Note that for any cube $B_v\in\B$ either $B_v$ itself is in $\Binner\cup\Bleaf$ or
$B_v=\text{parent}(B_w)$ for some cube $B_w\in\Binner\cup\Bleaf$.
We now make the following case distinction.

\textsc{Case I:} $B=\text{parent}(B_w)$ for some cube $B_w\in\Binner\cup\Bleaf$.
But this is an immediate contradiction since Algorithm~\ref{alg:net} would have to
split $B'$ before splitting $B$.

\textsc{Case II:} $B\in\Binner\cup\Bleaf$.
Because $B$ itself was put into $\B$ and not $\text{parent}(B)$, there exists a cube
$B_w\in\Binner\cup\Bleaf$ such that $\text{parent}(B)\supsetneq\text{parent}(B_w)$,
which means $\text{size(parent(}B_w))<\text{size(parent(}B))$.
In order to complete the proof, it suffices to show that $\text{size(parent(}B_w))\leq\text{size}(B)$.
Indeed, since $B'$ has not been split by Algorithm~\ref{alg:net} (because $B'\in\Binner$)
we know that $\text{size}(B')\leq\text{size(parent(}B_w))$.
This inequality along with the inequality $\text{size(parent(}B_w))\leq\text{size}(B)$
imply that $\text{size}(B')\leq\text{size}(B)$ which is in contradiction with
$\text{size}(B)<\text{size}(B')$.
To show that $\text{size(parent(}B_w))\leq\text{size}(B)$ we consider the following
two subcases.
(i) $\text{parent}(B)$ is a degree-1 node.
This means that $\text{parent}(B)$ corresponds to a cube that was split into a
donut and the cube corresponding to $B$.
Since the cube corresponding to $B_w$ must be completely inside the cube corresponding
to $\text{parent}(B)$ (because $\text{size(parent(}B_w))<\text{size(parent(}B))$) and a
donut is empty we conclude that the cube corresponding to $B_w$ must be completely
inside the cube corresponding to $B$.
Hence, $\text{size(parent(}B_w))\leq\text{size}(B)$.
(ii) $\text{parent}(B)$ is not a degree-1 node.
The inequality $\text{size(parent(}B_w))<\text{size(parent(}B))$ along with the fact
that $\text{parent}(B)$ is not a degree-1 node imply that
$\text{size(parent(}B_w))\leq\text{size}(B)$.

This completes the proof that $\LB$ is a correct lower bound.
Next we prove that $\Rq$ is a weak $r$-packing for $r = \eps\cdot\LB/f(k)$.
Observe that after the loop in lines~\ref{line:loop2-start}--\ref{line:loop2-end},
the set $\Bleaf$ is still a cube cover of $\Sq$. Moreover, each cube~$B_v\in\Bleaf$
either contains a single point from $\Sq$ or its size is at most $r/\sqrt{d}$.
Lemma~\ref{lem:rnet} then implies that $\Rq$ is a weak $r$-packing for the desired
value of~$r$.
\medskip

It remains to bound the size of~$\Rq$. To this end we note that at each iteration
of the loop in lines~\ref{line:while1}--\ref{line:split-end} the size of $\Binner\cup\Bleaf$
increases by at most $2^d-1$, so after the loop we have
$|\Binner\cup\Bleaf|\leq k2^{2d}+2^d-1$. The loop in
lines~\ref{line:loop2-start}--\ref{line:loop2-end} replaces each cube
$B_v \in\Binner$ by a number of smaller cubes. Since
$\LB = c\cdot\max_{B_v\in\Binner} \size(B_v)$ and $r=\eps\cdot\LB/f(k)$,
each cube $B_v$ is replaced by only $O((f(k)2^d\sqrt{d}/(c\;\eps))^d)$ smaller cubes.
Since $d$ is a fixed constant, the total number of cubes we end up with
(which is the same as the size of the $r$-packing) is $O(k (f(k)/(c\;\eps))^d)$.
\end{proof}

Lemma~\ref{le:correctness}, together with Lemma~\ref{le:global-alg}, establishes the correctness
of our approach. To achieve a good running time, we need a few supporting data structures.
\begin{itemize}
\item We need a data structure that can answer the following queries: given a query box~$Z$,
      find the deepest node $u$ in $\tree(\pointset)$ such that $Z\subseteq B_u$. With a
      \emph{centroid-decomposition tree} $\tree_\text{cd}$
      we can answer such queries in $O(\log n)$ time.
A centroid-decomposition tree $\tree_\text{cd}$
on the compressed octree $\tree(\pointset)$ is defined as follows.
View $\tree(\pointset)$ as an acyclic graph of maximum degree~$2^d+1$.
Let $v^*$ be a centroid of $\tree(\pointset)$, that is, $v^*$ is a node
whose removal splits $\tree(\pointset)$ into at most $2^d+1$ subgraphs
each containing at most half the nodes.
We recursively construct centroid-decomposition trees
$\tree_\text{cd}^1,\tree_\text{cd}^2,\ldots $ for each of the subgraphs.
The centroid-decomposition tree~$\tree_\text{cd}$ now consists of a root node
corresponding to $v^*$ that has $\tree_\text{cd}^1,\tree_\text{cd}^2,\ldots$
as subtrees. Note that one of the subtrees corresponds to the region
outside~$B_{v^*}$, while the other subtrees correspond to regions inside
$B_{v^*}$ (namely the cubes of the children of $v^*$ in $\tree(\pointset)$).

With $\tree_\text{cd}$ we can answer the following queries in $O(\log n)$
time: given a query box~$Z$, find the deepest node $u$ in $\tree(\pointset)$ such
that $Z\subseteq B_u$. We briefly sketch the (standard) procedure for this.
First, check if $Z\subseteq B_{v^*}$, where $v^*$ is
the node of $\tree(\pointset)$ corresponding to the root of~$\tree_\text{cd}$.
If not, recursively search the subtree $\tree_\text{cd}^j$ corresponding to the
region outside~$B_{v^*}$. If $Z\subseteq B_{v^*}$ then check if $v^*$ has
a child~$w$ such that $Z\subseteq B_{w}$; if so, recurse on the corresponding
subtree $\tree_\text{cd}^{j'}$, and otherwise report $B_{v^*}$ as the answer.

\item We need a data structure $\ds$ that can answer the following queries on $\pointset$:
      given a query box~$Z$ and an integer $1\leq i\leq d$,
      report a point in $\pointset\cap Z$ with maximum $x_i$-coordinate, and one with minimum
      $x_i$-coordinate. It is possible to answer such queries in $O(\log^{d-1}n)$ time with
      a range tree (with fractional cascading), which uses $O(n\log^{d-1} n)$ storage.
      Note that this also allows us to compute
      the bounding box of $\pointset\cap Z$ in $O(\log^{d-1}n)$ time.
      (In fact slightly better bounds are possible~\cite{ls-updsrr-94}, but for simplicity
      we stick to using standard data structures.)
\end{itemize}
\begin{lemma} \label{le:correctness2}
Algorithm~\ref{alg:net} runs in
$O(k\left(f(k)/(c\;\eps)\right)^d+k\left((f(k)/(c\;\eps))\log n\right)^{d-1})$ time.
\end{lemma}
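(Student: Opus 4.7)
The plan is to bound the work of Algorithm~\ref{alg:net} iteration by iteration, where each iteration corresponds to removing one cube from $\Binner$ and splitting it. I will classify each such iteration as \emph{interior}, when the removed cube $B_v$ satisfies $B_v\subseteq\qrange$, or \emph{boundary}, when $B_v$ straddles~$\partial\qrange$. An interior iteration costs only $O(1)$: since $B_v\cap\Sq=B_v\cap\pointset$, the non-empty children of $v$ can be read directly from $\tree(\pointset)$ and no query to $\ds$ or $\tree_\text{cd}$ is required. A boundary iteration costs $O(\log^{d-1} n)$: one bounding-box query and at most $2^d$ emptiness queries to~$\ds$, plus one deepest-ancestor query to $\tree_\text{cd}$. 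Thus the total running time is $O(\log^{d-1} n)$ times the number of boundary iterations plus $O(1)$ times the number of interior iterations.

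Phase~1 performs at most $k2^{2d}+1$ iterations, which, trivially treated as boundary, contribute $O(k\log^{d-1} n)$; this is absorbed into the second term of the claimed bound. For Phase~2, every cube ever inserted into $\Binner\cup\Bleaf$ is (after splitting) one of the final leaf cubes or a constant-factor ancestor thereof, so by Lemma~\ref{le:correctness} the total number of iterations is $O(k(f(k)/(c\eps))^d)$. Charging each as an interior iteration contributes $O(k(f(k)/(c\eps))^d)$, matching the first term.

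The crux is to bound the number of boundary iterations in Phase~2 separately. Every boundary cube meets the $(d-1)$-dimensional surface $\partial\qrange$, and lies inside one of the $O(k)$ Phase~1 cubes whose maximum side $s_0:=\max_{B_v\in\Binner}\size(B_v)$ satisfies $\LB=c\,s_0$ by line~\ref{line:LB}. A standard surface-covering estimate shows that, at a fixed refinement level of side $\rho$, the number of such boundary cubes is $O(k(s_0/\rho)^{d-1})$. Summing the geometric series over the refinement levels $\rho=s_0,s_0/2,\ldots,r$ yields $O(k(s_0/r)^{d-1})$, dominated by the last term. Since $r=\eps\LB/f(k)$, we have $s_0/r=f(k)/(c\eps)$, so the total number of boundary iterations is $O(k(f(k)/(c\eps))^{d-1})$, contributing $O(k((f(k)/(c\eps))\log n)^{d-1})$ and matching the second term.

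The main obstacle is precisely this boundary count. A naive analysis charging $O(\log^{d-1}n)$ to every intermediate cube would inflate the bound to $O(k(f(k)/(c\eps))^d\log^{d-1}n)$, which is too large; the geometric-series argument above, together with the $(d-1)$-dimensional character of $\partial\qrange$, is what isolates the $\log^{d-1}n$ factor to the $(d-1)$st rather than the $d$th power of $f(k)/(c\eps)$.
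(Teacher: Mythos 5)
Your proof is correct and follows essentially the same route as the paper's: only cubes straddling $\partial\qrange$ require the $O(\log^{d-1}n)$ queries to $\ds$, Phase~1 has $O(k)$ iterations, and the number of boundary cubes in Phase~2 is a factor $f(k)/(c\,\eps)$ smaller than the total. Your geometric-series argument over refinement levels actually makes explicit a counting step that the paper only asserts.
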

\begin{proof}
We store the set $\Binner$ in a priority queue base on the size of the cubes,
so we can remove the cube of maximum size in $O(\log n)$ time. To handle
a cube~$B_v$ in an iteration of the first while loop we need $O(\log^{d-1}n)$ time,
which is the time needed to compute $\text{bb}(\Sq\cap B_v)$
using our supporting data structure~$\ds$. Next observe that each iteration of the
loop increases the size of $\Binner\cup\Bleaf$.
When $B_v \subseteq\query$ this is clear, since every internal node in $\tree(\pointset)$
has at least two children. When $B_v \not\subseteq\query$ we first replace $v$
by the deepest node $u$ such that $\text{bb}(\Sq\cap B_v)\subseteq B_{u}$.
This ensures that at least two of the children of $u$ must contain a point in $\Sq$,
so the size of $\Binner\cup\Bleaf$ also increases in this case.
We conclude that the number of iterations is bounded by~$k2^{2d}$, and so the running
time for Phase~1 is $O(k 2^{2d}\log^{d-1}n)$.

To bound the time for Phase~2 we observe that the computation of
$\text{bb}(\Sq\cap B_v)$ is only needed when $B_v \not\subseteq \query$.
Similarly, we only need our supporting data structure~$\ds$ for picking
a point from $\Sq\cap B_v$ in line~\ref{line:pick} when $B_v \not\subseteq\query$.
The total number of cubes that are handled and generated in Phase~2 is
$O(k (f(k)/(c\;\eps))^d)$, but the number of cubes that intersect the boundary
of the query range~$\query$ is a factor $f(k)/\eps$ smaller. Thus the
total time for Phase~2 is
$O(k\left(f(k)/(c\;\eps)\right)^d+k\left((f(k)/(c\;\eps))\right)^{d-1} \log^{d-1}n )$.
\end{proof}

This leads to the following theorem (where we use that $T_{\mathrm{ss}}$ is at least linear).
\begin{theorem}
\label{thm:noncapacitated}
Let $\pointset$ be a set of $n$ points in $\Reals^d$ and let $\costagg$ be a $(c,f(k))$-regular
cost function. Suppose we have an algorithm that solves the given
clustering problem on a set of $m$ points in $T_{\mathrm{ss}}(m)$ time.
Then there is a data structure that uses $O(n\log^{d-1}n)$ storage
such that, for a query range~$\query$ and query values $k\geq 2$
and $\eps>0$, we can compute a $(1+\eps)$-approximate answer to a range-clustering query in time
\[
O\left(
      k \left( \frac{f(k)}{c\;\eps} \cdot \log n \right)^{d-1}
   +  T_{\mathrm{ss}} \left( k \left( \frac{f(k)}{c\;\eps}\right)^d,k \right)
   +  T_{\mathrm{expand}}(n,k)
\right).
\]
\end{theorem}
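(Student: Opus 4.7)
\medskip
\noindent\textbf{Proof proposal for Theorem~\ref{thm:noncapacitated}.}
The plan is simply to assemble the ingredients already developed in this section and verify that storage, correctness, and running time all add up as claimed. First I would describe the preprocessing: build the (augmented) compressed octree~$\tree(\pointset)$, the centroid-decomposition tree~$\tree_\text{cd}$ on top of it for $O(\log n)$-time ``deepest enclosing node'' queries, and the range tree with fractional cascading realizing the supporting data structure~$\ds$ for extremal-coordinate queries. The octree and centroid tree use $O(n)$ storage, and the range tree dominates with $O(n\log^{d-1}n)$ storage, matching the bound in the statement.

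Next I would spell out the query algorithm: on input $(k,\qrange,\eps)$, run $\ClusterQuery$. Steps~\ref{step:lb} and~\ref{step:net} are executed by Algorithm~\ref{alg:net} using $\tree_\text{cd}$ and $\ds$ as sub-routines, producing a lower bound $\LB$ and a weak $r$-packing $\Rq$ of $\Sq$ with $r=\eps\cdot\LB/f(k)$. Step~\ref{step:slow} calls the given single-shot solver $\ssc$ on~$\Rq$, and Step~\ref{step:expand} expands the resulting $k$-clustering of~$\Rq$ to a $k$-clustering $\C^*$ of $\Sq$ using the expansion property from Definition~\ref{def:cost-function}.

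For correctness, Lemma~\ref{le:correctness} guarantees that $\LB\leq\Opt_k(\Sq)$ and that $\Rq$ is a weak $r$-packing of $\Sq$, and then Lemma~\ref{le:global-alg} directly yields $\costagg(\C^*)\leq(1+\eps)\cdot\Opt_k(\Sq)$. For the running time I would add three contributions. Lemma~\ref{le:correctness2} bounds Phases~1 and~2 of Algorithm~\ref{alg:net} by
\[
O\!\left(k\!\left(\tfrac{f(k)}{c\,\eps}\right)^{\!d} + k\!\left(\tfrac{f(k)}{c\,\eps}\log n\right)^{\!d-1}\right).
\]
The solver contributes $T_{\mathrm{ss}}(|\Rq|,k)=T_{\mathrm{ss}}\!\left(k(f(k)/(c\,\eps))^d,k\right)$ by the size bound on $\Rq$ in Lemma~\ref{le:correctness}, and Step~\ref{step:expand} costs $T_{\mathrm{expand}}(n,k)$. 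Since $T_{\mathrm{ss}}$ is at least linear in its first argument, the first summand above is absorbed into the $T_{\mathrm{ss}}$ term, and the sum simplifies to the expression in the theorem statement.

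The main obstacle is essentially bookkeeping rather than a new idea: I need to be careful that the single-shot solver's $m$-argument is instantiated with the packing size guaranteed by Lemma~\ref{le:correctness} (not with $|\Sq|$), that the storage bound is dominated by the range tree and not by the octree or centroid tree, and that the $O(\log n)$ overhead per priority-queue operation in Algorithm~\ref{alg:net} is already subsumed by the $\log^{d-1}n$ factor coming from the bounding-box queries, so no extra term appears. Once these details are checked, the theorem follows by combining Lemmas~\ref{le:global-alg}, \ref{le:correctness}, and~\ref{le:correctness2}.
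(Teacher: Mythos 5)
Your proposal is correct and follows essentially the same route as the paper, which itself presents Theorem~\ref{thm:noncapacitated} as a direct combination of Lemmas~\ref{le:global-alg}, \ref{le:correctness}, and~\ref{le:correctness2} together with the remark that $T_{\mathrm{ss}}$ is at least linear (which, as you note, absorbs the $k(f(k)/(c\,\eps))^d$ term from Lemma~\ref{le:correctness2}). The bookkeeping points you flag --- storage dominated by the range tree, the solver invoked on $|\Rq|$ rather than $|\Sq|$, and the priority-queue overhead being subsumed --- are exactly the right details and are all handled as you describe.
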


As an example application we consider $k$-center queries in the plane.
(The result for rectilinear 2-center queries is actually inferior to the
exact solution presented later.)
\begin{corollary}
\label{cor:noncapacitated}
Let $\pointset$ be a set of $n$ points in $\Reals^2$. There is a data structure that uses
$O(n\log n)$ storage such that, for a query range~$\query$ and query values $k\geq 2$
and $\eps>0$, we can compute a $(1+\eps)$-approximate answer to a $k$-center query within the
following bounds:
\begin{enumerate}
\item[(i)] For the rectilinear case with $k=2$ or~3, the query time is $O((1/\eps)\log n + 1/\eps^2)$;
\item[(ii)] For the rectilinear case with $k=4$ or~5, the query time is
\[
O((1/\eps)\log n + (1/\eps^2)\cdot\polylog(1/\eps));
\]
\item[(iii)] For the Euclidean case with $k=2$, the expected query time is
\[
O((1/\eps)\log n + (1/\eps^2)\log^2(1/\eps));
\]
\item[(iv)]  For the rectilinear case with $k>5$ and the Euclidean case with $k>2$ the query time is
$O( (k/\eps) \log n  +  (k/\eps)^{O(\sqrt{k})})$.
\end{enumerate}
\end{corollary}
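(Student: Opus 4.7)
The plan is to apply Theorem~\ref{thm:noncapacitated} directly with $d=2$ and the regularity parameters established in the examples preceding it. For the rectilinear case we have $(c, f(k)) = (1/(2\sqrt{2}),\,1)$ and for the Euclidean case $(c, f(k)) = (1/2,\,1)$; in either case $f(k)/(c\,\eps) = \Theta(1/\eps)$. Moreover, for $k$-center the expansion step merely enlarges each enclosing ball by $r$, so $T_{\mathrm{expand}}(n,k) = O(k)$, which is absorbed by the other terms for the values of $k$ considered. The $O(n\log n)$ storage bound is immediate from the $O(n\log^{d-1}n)$ bound in the theorem. Substituting into the formula, the query time reduces to
\[
O\!\left(\tfrac{k}{\eps}\log n \;+\; T_{\mathrm{ss}}\!\left(O\!\left(\tfrac{k}{\eps^2}\right),k\right)\right).
\]

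It then suffices to plug in the best known single-shot algorithm for each case. For case~(i), the rectilinear 2-center and 3-center problems are solvable in $O(m)$ time (e.g., via the algorithms of Sharir--Welzl or Drezner), so $T_{\mathrm{ss}}(m,k) = O(m)$ and the query time becomes $O((1/\eps)\log n + 1/\eps^2)$. For case~(ii), the rectilinear 4-center and 5-center problems admit $O(m\,\polylog m)$ algorithms, giving the stated bound after substitution. For case~(iii), the Euclidean 2-center problem can be solved in expected $O(m\log^2 m)$ time (Eppstein, or Chan), which substitutes to the claimed expected $O((1/\eps)\log n + (1/\eps^2)\log^2(1/\eps))$ bound. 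For case~(iv), one invokes the best known single-shot planar $k$-center algorithms, whose running time is $m^{O(\sqrt{k})}$, yielding $T_{\mathrm{ss}}(O(k/\eps^2),k) = (k/\eps)^{O(\sqrt{k})}$ and thus the claimed query time.

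The only real work is confirming that the single-shot running times compose cleanly: since each $T_{\mathrm{ss}}(m,k)$ is applied to an argument $m = O(k/\eps^2)$ that depends only on $k$ and $\eps$ (not on $n$), the $\log n$ factor appears only through Phase~1 of Algorithm~\ref{alg:net}, producing the additive structure in each bound. The main obstacle is therefore bookkeeping rather than mathematics: one must verify that for each regime of $k$, the $T_{\mathrm{ss}}$ contribution matches the additive form stated. No separate analysis beyond Theorem~\ref{thm:noncapacitated} is required, and the approximation guarantee follows from Lemma~\ref{le:global-alg}.
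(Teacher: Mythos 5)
Your proposal is correct and follows essentially the same route as the paper: instantiate Theorem~\ref{thm:noncapacitated} with $d=2$ and the stated regularity parameters, observe that the weak $r$-packing has size $O(k/\eps^2)$, and plug in the appropriate single-shot algorithm for each regime of $k$ (the paper cites Hoffmann for the linear-time rectilinear $2$/$3$-center rather than Sharir--Welzl or Drezner, but this is only a difference in attribution, not in the argument). No further comment is needed.
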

\begin{proof}
Recall that the cost function for the $k$-center problem is $(1/(2\sqrt{d}),1)$-regular for
the rectilinear case and $(1/2,1)$-regular for the Euclidean case. We now obtain
our results by plugging in the appropriate algorithms for the single-shot version.
For (i) we use the linear-time  algorithm of Hoffmann~\cite{h-slt-05},
for (ii) we use the $O(n\cdot\polylog n)$-time algorithms of Sharir and Welzl~\cite{sw-rppp-96},
for (iii) we use the $O(n\log^2 n)$-time randomized algorithm of Eppstein~\cite{e-fcptc-97},
and for (iv) we use the $n^{O(\sqrt{k})}$-time algorithm of Agarwal and Procopiuc~\cite{ap-algo-02}.
\end{proof}

\section{Approximate Capacitated \texorpdfstring{$k$}{k}-Center Queries}\label{sec:capacitated}
In this section we study the capacitated variant of the rectilinear $k$-center problem in the plane.
In this variant we want to cover a set $\pointset$ of $n$ points in $\Reals^2$ with
$k$~congruent squares of minimum size, under the condition that
no square is assigned more than $\alpha\cdot n/k$ points, where $\alpha>1$ is a given constant.
For a capacitated rectilinear $k$-center query this means
we want to assign no more than $\alpha\cdot |S_Q|/k$ points to each square.
Our data structure will report a $(1+\eps,1+\delta)$-approximate answer to capacitated
rectilinear $k$-center queries: given a query range~$\query$, a natural number~$k\geq 2$,
a constant~$\alpha>1$, and real numbers $\eps,\delta>0$, it computes a
set~$\C=\{b_1,\ldots,b_k\}$ of congruent squares such that:
\begin{itemize}
\item each $b_i$ can be associated to a subset $C_i\subseteq \Sq\cap b_i$ such that
      $\{C_1,\ldots,C_k\}$ is a $k$-clustering of $\Sq$ and
      $|C_i|\leq (1+\delta) \alpha\cdot |S_Q|/k$; and
\item the size of the squares in $\C$ is at most $(1+\eps)\cdot \Opt_k(\Sq,\alpha)$,
      where $\Opt_k(\Sq,\alpha)$ is the value of an optimal solution to the problem on $\Sq$
      with capacity upper bound~$U_Q := \alpha\cdot |S_Q|/k$.
\end{itemize}

Thus we allow ourselves to violate the capacity constraint by a factor~$1+\delta$.
\medskip

To handle the capacity constraints, it is not sufficient to work with $r$-packings---we
also need $\delta$-approximations. Let $P$ be a set of points in $\Reals^2$.
A \emph{$\delta$-approximation} of $P$ with respect to axis-aligned rectangles is a
subset $\sample\subseteq P$ such that for any rectangle $\range$ we have
\[
\big| \ |P\cap\range| / |P| \; - \; |\sample\cap\range| / |\sample| \big| \le \delta
\]

From now on, whenever we speak of $\delta$-approximations, we mean $\delta$-approximations
with respect to rectangles. Our method will use a special variant of the capacitated
$k$-center problem, where we also have points that must be covered but do not count
for the capacity:
\begin{definition}
Let $R\cup A$ be a point set in $\Reals^2$, $k\geq 2$ a natural number, and $U$ a capacity bound.
The \emph{0/1-weighted capacitated $k$-center problem} in $\Reals^2$ is to compute
a set~$\C=\{b_1,\ldots,b_k\}$ of congruent squares of minimum size where
each $b_i$ is associated to a subset $C_i\subseteq (R\cup A)\cap b_i$ such that
$\{C_1,\ldots,C_k\}$ is a $k$-clustering of $R\cup A$ and $|C_i\cap A|\leq U$.
\end{definition}

For a square~$b$, let $\text{expand}(b,r)$ denote the square $b$ expanded by~$r$ on each side
(so its radius in the $\infmetric$-metric increases by~$r$). Let
$\pckc$ be an algorithm for the single-shot capacitated rectilinear $k$-center problem.
Our query algorithm is as follows.

\begin{algorithm}
\caption{$\CapacitatedClusterQuery(k,\query,\alpha,\eps,\delta)$.}
\myenumerate{
\item \label{step:cap-lb} Compute a lower bound $\LB$ on $\Opt_k(\Sq)$.
\item \label{step:cap-net} Set $r := \eps \cdot \LB /{f(k)}$ and compute a weak $r$-packing $R$ on $\Sq$.
\item \label{step:cap-approx} Set $\deltaq := \delta/{16k^3}$ and compute a $\deltaq$-approximation $\Aq$ on $\Sq$.
\item \label{step:cap-slow} Set $U := (1+\delta/2)\cdot\alpha\cdot|\Aq|/k$ and $\C := \pckc(R\cup\Aq,k,U)$.
\item \label{step:cap-expand} $\C^* := \{ \text{expand}(b,r) : b \in \C \}$.
\item \label{step:cap-return} Return $\C^*$.
}
\end{algorithm}


Note that the lower bound computed in Step~\ref{step:cap-lb} is a lower bound on
the uncapacitated problem (which is also a lower bound for the capacitated problem).
Hence, for Step~\ref{step:cap-lb} and Step~\ref{step:cap-net} we
can use the algorithm from the previous section. How Step~\ref{step:cap-approx} is
done will be explained later. First we show that the algorithm gives a
$(1+\eps,1+\delta)$-approximate solution. We start by showing that we get a valid
solution that violates the capacity constraint by at most a factor ~$1+\delta$.
\begin{lemma}\label{le:cap-correctness}
Let $\C^*:=\{b_1,\ldots,b_k\}$ be the set of squares computed in Step~\ref{step:cap-expand}.
There exists a partition $\{C_1,\ldots,C_k\}$ of $\Sq$ such that
$C_i\subseteq b_i$ and $|C_i|\leq (1+\delta) \cdot U_Q$ for each $1\leq i\leq k$,
and such a partition can be computed in $O(k^2 + n\log n)$ time.
\end{lemma}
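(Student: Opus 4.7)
My plan is as follows. I will prove coverage, then construct an explicit partition using the cube cover produced by Algorithm~\ref{alg:net}, then bound the cluster sizes via the $\deltaq$-approximation, and finally describe the algorithm. I expect the capacity bound to be the main obstacle.

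Coverage follows from the weak-$r$-packing property: for every $p\in\Sq$ there is $q\in R$ with $|pq|\leq r$, and since the original (pre-expansion) squares $\{b_i^{\mathrm{orig}}\}$ returned by $\pckc$ cover $R\cup\Aq$, we have $q\in b_i^{\mathrm{orig}}$ for some $i$, so $p\in\mathrm{expand}(b_i^{\mathrm{orig}},r)=b_i$. For the partition, let $\Bleaf$ be the final cube cover from Algorithm~\ref{alg:net}; each $B\in\Bleaf$ has a unique representative $q_B\in R$ and the cubes partition $\Sq$. Setting $C_i := \bigcup\{\Sq\cap B : q_B\in C_i^{\mathrm{inner}}\}$, where $\{C_i^{\mathrm{inner}}\}$ is the $\pckc$-partition of $R\cup\Aq$, therefore gives a partition of $\Sq$, and because $p$ and $q_B$ lie in the same cube of diameter at most $r$ with $q_B\in b_i^{\mathrm{orig}}$, we have $C_i\subseteq b_i$.

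For the capacity bound, let $R_i := C_i^{\mathrm{inner}}\cap R$ and $U_i := \bigcup_{q\in R_i}B_q$. Applying the $\deltaq$-approximation inequality to each rectangular cube $B_q$ and summing yields
\[
|C_i| = \sum_{q\in R_i}|\Sq\cap B_q| \;\leq\; \frac{|\Sq|}{|\Aq|}\,|\Aq\cap U_i| \,+\, |R_i|\,\deltaq\,|\Sq|.
\]
The remaining task is to bound $|\Aq\cap U_i|$ by (something close to) the $\pckc$ capacity $U=(1+\delta/2)\alpha|\Aq|/k$. Because $U_i\subseteq b_i$ and $\Aq$ points of $U_i$ that are not in $C_i^{\mathrm{inner}}$ must sit in cubes spanning overlapping squares $b_j^{\mathrm{orig}}$, further applications of the $\deltaq$-approximation bound the difference $|\Aq\cap U_i|-|C_i^{\mathrm{inner}}\cap\Aq|$ by $\mathrm{poly}(k)\cdot\deltaq\cdot|\Aq|$. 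With $\deltaq=\delta/(16k^3)$, both error terms above are absorbed into the $\delta/2$ slack built into $U$, so $|C_i|\leq U\cdot|\Sq|/|\Aq|+(\delta/2)U_Q\leq(1+\delta)U_Q$.

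For the running time, I would preprocess $\Bleaf$ for planar point location (in time depending only on $k$ and $\eps$, which fits in the $O(k^2)$ term), then for each $p\in\Sq$ locate its cube and look up the cluster label in $O(\log n)$ time, giving the advertised $O(k^2+n\log n)$ bound. The main obstacle is the capacity argument: controlling $|\Aq\cap U_i|$ by $U$ with only polynomial-in-$k$ error (rather than the exponential-in-$k$ loss one would get from a naive inclusion-exclusion over the expanded squares) is exactly what forces the specific scaling $\deltaq=\Theta(\delta/k^3)$ used in the algorithm.
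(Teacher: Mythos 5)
Your coverage argument and your construction of a partition with $C_i\subseteq b_i$ are fine, but the capacity bound has a genuine gap, and it sits exactly where you predicted. Two steps fail. First, your error term $|R_i|\cdot\deltaq\cdot|\Sq|$ is not controlled by the choice $\deltaq=\delta/(16k^3)$: the number of cubes in $\Bleaf$, and hence $|R_i|$, can be as large as $|R|=\Theta(k(f(k)/(c\,\eps))^d)$, which grows as $\eps\to 0$. You pay one additive $\deltaq|\Sq|$ per rectangle you apply the approximation guarantee to, so applying it once per cube of the cover gives an error of order $(k/\eps^2)\cdot\deltaq\cdot|\Sq|$ in the plane, which is not absorbed by the $\delta/2$ slack for small $\eps$. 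Second, and more fundamentally, there is no way to bound $|\Aq\cap U_i|$ by anything close to $U$: the $\pckc$ partition of $R\cup\Aq$ is under no obligation to respect your cube cover, so a single cube $B_q$ whose representative $q$ lands in $C_i^{\mathrm{inner}}$ may contain a huge number of $\Aq$-points that $\pckc$ assigned to \emph{other} clusters precisely in order to meet its capacity constraint. In the extreme, essentially all of $\Aq$ can lie in cubes whose representatives belong to one cluster, making $|\Aq\cap U_i|\approx|\Aq|$ while $U\approx\alpha|\Aq|/k$. The $\deltaq$-approximation property compares $\Sq$-counts to $\Aq$-counts in rectangles; it says nothing about how $\pckc$ distributes the $\Aq$-points of a rectangle among overlapping squares, so ``further applications'' of it cannot close this gap.

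The paper's proof avoids both problems by \emph{not} reusing the cube cover to define the partition of $\Sq$. Instead it draws the $4k$ lines through the edges of the squares in $\C$, obtaining $O(k^2)$ cells; each cell $\range$ lying in $j$ squares is cut into $j$ rectangular subcells $\range_1,\ldots,\range_j$ chosen so that $|\Aq\cap\range_t|$ \emph{exactly equals} $|A_{i_t}\cap\range|$, where $A_1,\ldots,A_k$ is the $\pckc$-partition of $\Aq$; then $\Sq\cap\range_t$ is assigned to $b_{i_t}$. This exact count-matching makes $\sum_{\range\in\ranges_i}|\Aq\cap\range|=|A_i|\leq U$ hold with no slack at all, and the approximation guarantee is invoked only once per subcell, i.e.\ $O(k^2)$ times per cluster, which is what the scaling $\deltaq=\delta/(16k^3)$ is calibrated to. If you want to salvage your write-up, you need to replace your cube-based partition by this (or an equivalent) count-matching decomposition into $\poly(k)$ rectangles; the rest of your argument (coverage, containment, and the $O(k^2+n\log n)$ assignment time) then goes through.
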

\begin{proof}
Since $R$ is a weak $r$-packing, after expanding the squares in Step~\ref{step:cap-expand}
they cover all points in $\Sq$. Next we show that we can assign the points in $\Sq$
to the squares in $\C^*$ such that the capacities are not violated by
more than a factor~$1+\delta$.

Since $\C$ is a solution to the 0/1-weighted capacitated problem on $\Rq\cup\Aq$,
there is a partition $A_1,\ldots,A_k$ of $\Aq$ such that $A_i\subset b_i$
and $|A_i|\leq U$ for all $1\leq i\leq k$. Partition the plane into a
collection $\ranges$ of $O(k^2)$ cells by drawing the at most $2k$ vertical and $2k$ horizontal
lines containing the edges of the squares in~$\C$.
Consider a cell $\range\in\ranges$ and assume $\range$ is inside $j$~different
squares~$b_{i_1},\ldots,b_{i_j}\in\C$.
We can partition $\range$ into $j$ rectangular subcells $\range_1,\ldots,\range_{j}$
such that $|\Aq\cap\range_t| = |A_{i_t}\cap \range|$ for all $1\leq t\leq j$:
subcell~$\range_1$ will contain the topmost $|A_{i_1}\cap \range|$ points from ~$\Aq$,
subcell~$\range_2$ will contain the next $|A_{i_2}\cap \range|$ points,
and so on.
The total time for this is $O(n_{\range}\log n_{\range})$ time, where
$n_\range := |A_Q\cap\range|$.
We now assign all points from $\Sq \cap \range_{i_t}$ to the square $b_{i_t}$; in other words,
we put the points from $\Sq \cap \range_{i_t}$ into the cluster~$C_{i_t}$.
If we do this for all regions $\range\in\ranges$, we obtain the desired
partition $\{C_1,\ldots,C_k\}$ of $\Sq$.

It remains to prove that $|C_i|\leq (1+\delta) \cdot U_Q$ for each $1\leq i\leq k$.
Let $\ranges_i$ be the set of all subcells assigned to $C_i$.
Observe that $\sum_{\range\in\ranges_i} |\Aq\cap\range| = |A_i| \leq U$
and that\footnote{In fact, the description above would give $|\ranges_i|\leq 4k^2$.
However, in degenerate cases we may need two subcells for some $A_{i_t}$ when we
subdivide a cell $\sigma$, increasing the number of subcells in $\ranges_i$ by
at most a factor of~2.}
$|\ranges_i|\leq 8k^2$.
Moreover, since $A_Q$ is a
$\deltaq$-approximation for $\Sq$ we have
\[
|\Sq\cap\range|  \le \deltaq \cdot |\Sq| + |\Aq\cap\range|\cdot \frac{|\Sq|}{|\Aq|}.
\]

Hence,
\[
\begin{array}{lll}
|C_i| & = & \sum_{\range\in\ranges_i} |\Sq \cap \range| \\[2mm]
      & \leq & \sum_{\range\in\ranges_i} \big( \deltaq \cdot |\Sq| + |\Aq\cap\range|\cdot \frac{|\Sq|}{|\Aq|} \big) \\[2mm]
      & \leq &  |\ranges_i| \cdot \deltaq \cdot |\Sq| + \sum_{\range\in\ranges_i} |\Aq\cap\range|\cdot \frac{|\Sq|}{|\Aq|} \\[2mm]
      & \leq &  8k^2 \cdot \deltaq \cdot |\Sq| + U \cdot \frac{|\Sq|}{|\Aq|} \\[2mm]
      & \leq &  (\delta/(2k)) \cdot |\Sq| + \big((1+\delta/2)\cdot\alpha\cdot|\Aq|/k \big) \cdot \frac{|\Sq|}{|\Aq|} \\[2mm]
      & \leq &  (\delta/2) \cdot |\Sq|/k + (1+\delta/2)\cdot\alpha\cdot|\Sq|/k  \\[2mm]
      & \leq &  (1+\delta)\cdot U_Q \mbox{\hspace*{5mm} (since $\alpha\geq 1$)}
\end{array}
\]

To finish the proof, it remains to observe that the assignment of points to the
expanded squares described above can easily be done in $O(k^2+n\log n)$ time.
\end{proof}

We also need to prove that we get a $(1+\eps)$-approximate solution.
To this end, it suffices to show that an optimal solution $\Copt$ to the problem on~$\Sq$
is a valid solution on~$R\cup \Aq$. We can prove this by a similar approach
as in the proof of the previous lemma.
\begin{lemma}\label{le:cap-value}
The size of the squares in $\C^*$ is at most $(1+\eps)\cdot \Opt_k(\Sq,\alpha)$.
\end{lemma}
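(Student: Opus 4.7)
The plan is to show that the optimal capacitated solution $\Copt=\{b_1^{\text{opt}},\ldots,b_k^{\text{opt}}\}$ on $\Sq$, with an associated partition $\{C_1^{\text{opt}},\ldots,C_k^{\text{opt}}\}$ of $\Sq$, is, after a suitable reassignment of its points, a feasible solution to the 0/1-weighted capacitated instance $(R\cup\Aq,k,U)$ solved in Step~\ref{step:cap-slow}. Once this is established, $\pckc$ necessarily returns a clustering $\C$ with $\radius_\infty(\C)\le\Opt_k(\Sq,\alpha)$, and the expansion step then yields the claimed $(1+\eps)$-approximation.

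First I would build the candidate 0/1-weighted solution by mimicking the cell construction from Lemma~\ref{le:cap-correctness}. Decompose the plane into $O(k^2)$ rectangular cells using the axis-aligned lines containing the edges of the squares in $\Copt$. For each cell $\range$ inside squares $b_{i_1}^{\text{opt}},\ldots,b_{i_j}^{\text{opt}}$, the optimal partition assigns $|C_{i_t}^{\text{opt}}\cap\range|$ points of $\Sq\cap\range$ to $b_{i_t}^{\text{opt}}$. I would subdivide $\range$ vertically into rectangular subcells $\range_1,\ldots,\range_j$ with $|\Sq\cap\range_t|=|C_{i_t}^{\text{opt}}\cap\range|$, and then assign every point of $(R\cup\Aq)\cap\range_t$ to the cluster of $b_{i_t}^{\text{opt}}$.

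The main obstacle is verifying that this assignment respects the capacity $U$ on the $\Aq$-mass of each cluster. For cluster $i$, let $\ranges_i$ denote the set of subcells assigned to $b_i^{\text{opt}}$; the same counting as in Lemma~\ref{le:cap-correctness} gives $|\ranges_i|\le 8k^2$. Since each $\range_t$ is an axis-aligned rectangle, the $\deltaq$-approximation property yields $|\Aq\cap\range_t|\le(|\Aq|/|\Sq|)\cdot|\Sq\cap\range_t|+\deltaq|\Aq|$, and by construction $|\Sq\cap\range_t|=|C_{i_t}^{\text{opt}}\cap\range|$. Summing over $\ranges_i$, using $\sum|C_i^{\text{opt}}\cap\range|=|C_i^{\text{opt}}|\le U_Q=\alpha|\Sq|/k$ together with the choice $\deltaq=\delta/(16k^3)$, I would obtain
\[
|C_i\cap\Aq| \le \alpha|\Aq|/k + 8k^2\deltaq|\Aq| = \alpha|\Aq|/k + \delta|\Aq|/(2k) \le (1+\delta/2)\,\alpha|\Aq|/k = U,
\]
where the last inequality uses $\alpha\ge 1$. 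Thus $\Copt$ is feasible for the 0/1-weighted instance, so $\radius_\infty(\C)\le\Opt_k(\Sq,\alpha)$.

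Finally, I would combine this with the expansion bound. Since the rectilinear $k$-center cost is $(c,1)$-regular, we have $f(k)=1$ and hence $r=\eps\cdot\LB$; Lemma~\ref{le:correctness} together with the obvious inequality $\Opt_k(\Sq)\le\Opt_k(\Sq,\alpha)$ gives $r\le\eps\cdot\Opt_k(\Sq,\alpha)$. Expanding each square by $r$ increases its $\infmetric$-radius by exactly $r$, so $\radius_\infty(\C^*)\le\radius_\infty(\C)+r\le(1+\eps)\cdot\Opt_k(\Sq,\alpha)$, as required.
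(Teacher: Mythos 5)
Your proposal is correct and follows essentially the same route as the paper: both show that the optimal capacitated clustering of $\Sq$ induces a feasible solution to the 0/1-weighted instance on $R\cup\Aq$ by reusing the $O(k^2)$-cell/subcell decomposition of Lemma~\ref{le:cap-correctness} (with the roles of $\Sq$ and $\Aq$ swapped in the counting), and then bound the $\Aq$-mass per cluster by $U$ using the $\deltaq$-approximation property and $\alpha\ge 1$. Your final paragraph making explicit the step from feasibility to the $(1+\eps)$ bound via $r\le\eps\cdot\Opt_k(\Sq,\alpha)$ is left implicit in the paper but is a welcome addition.
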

\begin{proof}
We show that an optimal solution $\Copt$ to the problem on~$\Sq$
is a valid solution on~$R\cup \Aq$. Let $\{b_1,\ldots,b_k\}$ and $\{C_1,\ldots,C_k\}$ be the
sets of squares and their corresponding clusters in a solution of value $\Opt_k(\Sq,\alpha)$
for $\Sq$. We claim that we can assign the points in $\Aq$ to the squares~$b_i$ such that
no square is assigned more than $U$ points, where $U:=(1+\delta/2)\alpha \cdot |\Aq|/k$.
We can do this following
a similar approach as in the proof of Lemma~\ref{le:cap-correctness}:
we partition the plane
into $O(k^2)$ cells, which we partition further into subcells that are assigned
to squares~$b_i$ such that $\sum_{\range\in\ranges_i} |\Sq\cap\range| = |C_i|$,
where $\ranges_i$ is the collection of subcells assigned to~$b_i$. Then
for each $b_i$ we have
\[
\begin{array}{lll}
\sum_{\range\in\ranges_i} |\Aq \cap \range|
      & \leq & \sum_{\range\in\ranges_i} \big( \deltaq \cdot |\Aq| + |\Sq\cap\range|\cdot \frac{|\Aq|}{|\Sq|} \big) \\[2mm]
      & \leq &  |\ranges_i| \cdot \deltaq \cdot |\Aq| + \sum_{\range\in\ranges_i} |\Sq\cap\range|\cdot \frac{|\Aq|}{|\Sq|} \\[2mm]
      & \leq &  8k^2 \cdot \deltaq \cdot |\Aq| + U_Q \cdot \frac{|\Aq|}{|\Sq|} \\[2mm]
      & \leq &  (\delta/(2k)) \cdot |\Aq| + \alpha \cdot |\Aq|/k \\[2mm]
      & \leq &  (\delta/2) \cdot \alpha\cdot|\Aq|/k + \alpha \cdot |\Aq|/k  \\[2mm]
      &  =   &  U \\
\end{array}
\]
\end{proof}

To make $\CapacitatedClusterQuery$ run efficiently, we need some more supporting data structures.
In particular, we need to quickly compute a $\deltaq$-approximation within our range~$\query$.
To this end, we use the following data structures.
\begin{itemize}
\item We compute a collection $A_1,\ldots,A_{\log n}$, where $A_i$ is a $(1/2^i)$-approximation on $\pointset$,
      using the algorithm of Phillips~\cite{p-aeat-08}. This algorithm computes, given a planar point set~$P$
      of size~$n$ and a parameter~$\delta$, a $\delta$-approximation of size
      $O((1/\delta) \log^4 (1/\delta)\cdot\polylog(\log(1/\delta)))$ in time
      $O((n/\delta^3)\cdot\polylog (1/\delta))$.
      We store each $A_i$ in a data structure for orthogonal range-reporting queries. If we use a range tree
      with fractional cascading, the data structure uses $O(|A_i|\log |A_i|)$ space
      and we can report all the points in $A_i \cap \query$ in time $O(\log n + |A_i \cap \query|)$.
\item We store $\pointset$ in a data structure for orthogonal range-counting queries. There is
      such a data structure that needs $O(n)$ space and it can answer orthogonal range-counting
      queries in $O(\log n)$ time~\cite{c-fads-88}.
\end{itemize} 

We can now compute a $\deltaq$-approximation for $\Sq$ as follows.

\begin{algorithm}
\caption{$\DeltaSample(\query,\deltaq)$.}
\myenumerate{
\item \label{step:delta-resolution} Find the smallest value for $i$ such that
          $\frac{1}{2^i} \le \frac{\deltaq}{4}\frac{|\pointset_\query|}{|\pointset|}$, and
          compute $A := \query\cap A_i$.
\item \label{step:delta-sample} Compute a $(\deltaq/2)$-approximation $\Aq$ on
              $A$ using the algorithm by Phillips~\cite{p-aeat-08}.
\item \label{step:delta-slow} Return $\Aq$.
}
\end{algorithm}
\begin{lemma}\label{le:sample-correctness}
$\DeltaSample(\query,\deltaq)$ computes a $\deltaq$-approximation of size
\[
O((1/\deltaq)\cdot\polylog(1/\deltaq))
\]
on $\Sq$ in time
$O\left(\log^4(n/\deltaq)\cdot\polylog(\log n/\deltaq)\right)$.
\end{lemma}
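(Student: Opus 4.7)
The plan is to prove the three assertions---being a $\deltaq$-approximation, the size bound, and the running-time bound---largely independently. For correctness I will exploit the standard composition rule for approximations: if $A$ is an $\alpha$-approximation of $\Sq$ with respect to axis-aligned rectangles and $\Aq$ is a $\beta$-approximation of $A$, then $\Aq$ is an $(\alpha+\beta)$-approximation of $\Sq$ (immediate from the triangle inequality applied to the defining inequality). Since Step~\ref{step:delta-sample} directly guarantees that $\Aq$ is a $(\deltaq/2)$-approximation of $A$, it suffices to prove that $A := \query\cap A_i$ is a $(\deltaq/2)$-approximation of $\Sq$; together with the composition rule this yields the required $\deltaq$-approximation.

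Proving this intermediate approximation claim is the main technical obstacle. The starting point is that $A_i$ is a $(1/2^i)$-approximation of $\pointset$; applying the defining inequality to the rectangle $\query\cap\sigma$, where $\sigma$ is an arbitrary test rectangle, gives
\[
\left| \frac{|A\cap\sigma|}{|A_i|} \; - \; \frac{|\Sq\cap\sigma|}{|\pointset|} \right| \leq \frac{1}{2^i}.
\]
Specializing $\sigma$ to the plane also yields the two-sided estimate $\big| |A|/|A_i| - |\Sq|/|\pointset| \big| \leq 1/2^i$. Writing the target ratio as $|A\cap\sigma|/|A| = (|A\cap\sigma|/|A_i|)/(|A|/|A_i|)$ and plugging in the upper and lower estimates on numerator and denominator, a short calculation (using $|\Sq\cap\sigma|/|\Sq|\le 1$) bounds the deviation $\big| |A\cap\sigma|/|A| - |\Sq\cap\sigma|/|\Sq| \big|$ by $2(1/2^i)/(|\Sq|/|\pointset| - 1/2^i)$. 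Feeding in the algorithm's choice $1/2^i \le (\deltaq/4)\cdot |\Sq|/|\pointset|$ then yields the bound $\deltaq/2$; the constant $4$ appearing in Step~\ref{step:delta-resolution} is chosen precisely to make this calculation close.

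The size bound is a direct consequence of Phillips' guarantee invoked in Step~\ref{step:delta-sample} with parameter $\deltaq/2$, which produces an output of size $O((1/\deltaq)\log^4(1/\deltaq)\polylog(\log(1/\deltaq))) = O((1/\deltaq)\polylog(1/\deltaq))$. For the running time I would sum the cost of each step. The value $|\Sq|$ is obtained in $O(\log n)$ time from the orthogonal range-counting structure on $\pointset$, after which $i$ is determined in constant time; the set $A$ is reported via the range tree on $A_i$ in $O(\log|A_i|+|A|)$ time; finally Phillips is run on $A$ with parameter $\deltaq/2$. Using the smallest valid $i$ (so $2^i = \Theta((|\pointset|/|\Sq|)/\deltaq)$) to bound $|A_i|$ and hence $|A|$, and then applying Phillips' time guarantee, yields the stated overall bound after summing the three contributions.
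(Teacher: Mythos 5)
Your proposal follows essentially the same route as the paper: show that $A = A_i\cap\query$ is a $(\deltaq/2)$-approximation of $\Sq$ using the relative choice of $i$, compose with the $(\deltaq/2)$-approximation produced by Phillips's algorithm, and bound $|A|$ via the minimality of $i$ together with the approximation guarantee applied to $\query$ itself (the paper isolates the first step as a separate lemma but the algebra is interchangeable). One small quantitative slip: your stated deviation bound $2(1/2^i)/\bigl(|\Sq|/|\pointset| - 1/2^i\bigr)$ evaluates, after substituting $1/2^i \le (\deltaq/4)\cdot|\Sq|/|\pointset|$, to $(\deltaq/2)/(1-\deltaq/4)$ rather than $\deltaq/2$, so the composition gives roughly $(7/6)\deltaq$; the paper's manipulation (working with the reciprocal ratios $|\pointset\cap\range|/|\sample\cap\range|$ and exploiting $|\sample_\query\cap\range|\le|\sample_\query|$ at the end) closes exactly at $\deltaq$, whereas your version needs the constant $4$ in Step~1 enlarged slightly or the residual factor absorbed into the $O$-notation.
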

To prove Lemma~\ref{le:sample-correctness} we need the following additional lemma.
\begin{lemma}
\label{lem:logsample}
If $A$ is a $\delta^*$-approximation for a point set $\pointset$ in $\Reals^2$ with
\[
\delta^* \le (\delta/2)\cdot (|\pointset_\query|/|\pointset|),
\]
then
$\sample_\query :=\query\cap\sample$ is a $\delta$-approximation for $\pointset_\query :=\pointset\cap\query$.
\end{lemma}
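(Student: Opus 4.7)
The plan is to fix an arbitrary axis-aligned rectangle $\sigma$ and show that
\[
\Delta \;:=\; \left| \frac{|P_Q \cap \sigma|}{|P_Q|} \;-\; \frac{|A_Q \cap \sigma|}{|A_Q|} \right| \;\le\; \delta.
\]
Write $p := |P|$, $p_Q := |P_Q|$, $p_\sigma := |P_Q \cap \sigma|$, and analogously $a, a_Q, a_\sigma$ for $A$. The central observation is that both $Q$ and $Q \cap \sigma$ are axis-aligned rectangles, so the $\delta^*$-approximation property of $A$ can be applied directly to them. This yields the two inequalities
\[
|a \cdot p_Q - p \cdot a_Q| \;\le\; p\,a\,\delta^*,
\qquad
|a \cdot p_\sigma - p \cdot a_\sigma| \;\le\; p\,a\,\delta^*,
\]
because $P_Q \cap \sigma = P \cap (Q \cap \sigma)$ and likewise for $A$.

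Next, I would put the target quantity over a common denominator, writing $\Delta = |p_\sigma a_Q - a_\sigma p_Q|/(p_Q a_Q)$, and rewrite the numerator in terms of the two ``error'' quantities $X := a p_\sigma - p a_\sigma$ and $Y := a p_Q - p a_Q$. A direct algebraic expansion gives the identity
\[
p\,a\,(p_\sigma a_Q - a_\sigma p_Q) \;=\; p\,a_Q\,X \;-\; p\,a_\sigma\,Y,
\]
so $|p_\sigma a_Q - a_\sigma p_Q| \le \delta^* (a_Q + a_\sigma)\,p$. Since $A_Q \cap \sigma \subseteq A_Q$ we have $a_\sigma \le a_Q$, and thus the numerator is at most $2\,p\,\delta^*\,a_Q$. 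Dividing by $p_Q a_Q$ yields
\[
\Delta \;\le\; \frac{2\,p\,\delta^*}{p_Q}.
\]

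Finally, applying the hypothesis $\delta^* \le (\delta/2)(p_Q/p)$ gives $\Delta \le \delta$, completing the proof. The only place that requires a bit of care is the algebraic identity in the middle step — one has to check that the cross terms cancel so that the error is controlled by $X$ and $Y$ weighted by $a_Q$ and $a_\sigma$ (rather than $p_Q$, which would be harder to control); after this manipulation, the inequality $a_\sigma \le a_Q$ delivers the clean factor of $2$ that matches the $1/2$ in the hypothesis.
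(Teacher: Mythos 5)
Your proof is correct and follows essentially the same route as the paper's: both apply the $\delta^*$-approximation guarantee to the two ranges $Q$ and $Q\cap\sigma$, combine the resulting errors by the triangle inequality, and use $|A_Q\cap\sigma|\le|A_Q|$ to absorb the factor of $2$ against the $\delta/2$ in the hypothesis. The only (cosmetic) difference is that you work with the cross-multiplied quantity $|p_\sigma a_Q - a_\sigma p_Q|$ rather than the paper's chain of ratios $|P\cap\sigma|/|A\cap\sigma|$, which has the minor advantage of never dividing by $|A\cap\sigma|$.
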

\begin{proof}
Consider any rectangular range $\range\subset\query$. Since $A$ is a $\delta^*$-approximation for $\pointset$
we have
\begin{equation*}
\left |\frac{|\pointset\cap\range|}{|\pointset|}-\frac{|\sample\cap\range|}{|\sample|}\right | \le \delta^*,
\end{equation*}
and so
\begin{equation}
\label{eq:range}
\left |\frac{|\pointset\cap\range|}{|\sample\cap\range|}-\frac{|\pointset|}{|\sample|}\right | \le \delta^* \frac{|\pointset|}{|\sample\cap\range|}.
\end{equation}

Similarly, by considering~$Q$ itself as a range we know that
\begin{equation*}
\left |\frac{|\Sq|}{|\pointset|}-\frac{|\sample_\query|}{|\sample|}\right | \le \delta^*
\end{equation*}
and so
\begin{equation}
\label{eq:query}
\left |\frac{|\pointset_\query|}{|\sample_\query|}-\frac{|\pointset|}{|\sample|}\right | \le \delta^* \frac{|\pointset|}{|\sample_\query|}.
\end{equation}

Combining Inequalities~(\ref{eq:range}) and~(\ref{eq:query}) and replacing $\delta^*$ with its upper bound we get
\begin{equation*}
\left |\frac{|\pointset_\query|}{|\sample_\query|}-\frac{|\pointset\cap\range|}{|\sample\cap\range|}\right |
\le
\frac{\delta}{2} \cdot \frac{|\pointset_\query|}{|\pointset|} \cdot \left (\frac{|\pointset|}{|\sample_\query|} + \frac{|\pointset|}{|\sample\cap\range|}\right )
=
\frac{\delta}{2} \cdot |\pointset_\query| \cdot \left (\frac{1}{|\sample_\query|} + \frac{1}{|\sample\cap\range|}\right).
\end{equation*}

Since $\sample\cap\range=\sample_\query\cap\range$ and $\pointset\cap\range=\pointset_\query\cap\range$,
and $|\sample_\query\cap\range|\le|\sample_\query|$, we can now derive
\begin{align*}
\label{eq:simplified}
\left |\frac{|\sample_\query\cap\range|}{|\sample_\query|}-\frac{|\pointset_\query\cap\range|}{|\pointset_\query|}\right |
& \le
  \frac{|\sample_\query\cap\range|}{|\pointset_\query|}
  \cdot
  \left |\frac{|\pointset_\query|}{|\sample_\query|}-\frac{|\pointset\cap\range|}{|\sample\cap\range|}\right |
\\[2mm]
& \le
  \frac{\delta}{2} \cdot \frac{|\sample_\query\cap\range|}{|\pointset_\query|}
  \cdot |\pointset_\query|
  \cdot \left( \frac{1}{|\sample_\query|} + \frac{1}{|\sample_\query\cap\range|} \right)
\\[2mm]
& \le \frac{\delta}{2}\left (\frac{|\sample_\query\cap\range|}{|\sample_\query|} + 1\right )
\\[2mm]
&\le \delta
\end{align*}
which proves the lemma.
\end{proof}
Now we can prove Lemma~\ref{le:sample-correctness}.
\begin{proof}
By Lemma~\ref{lem:logsample}, the set $A$ computed in Step~\ref{step:delta-resolution}
of \DeltaSample is a $(\deltaq/2)$-approximation for $\Sq$. Computing $A$ requires a
range query on $A_i$, which takes $O(\log n+|A|)$ time.
The $(1/2^i)$-approximation $A_i$ computed (during preprocessing) by Phillips's algorithm
has size
\[
|A_i| \ = \ O(2^i \cdot \log^4 (2^i) \cdot \polylog(\log 2^i))
      \ = \  O(2^i \cdot \log^4 n \cdot \polylog(\log n)).
\]

As $i$ is the smallest value with $1/2^i \le (\deltaq/4)\cdot (|\Sq|/|\pointset|)$,
we have $1/2^i > (\deltaq/8)\cdot (|\Sq|/|\pointset|)$. Hence,
\[
|\Sq|/|\pointset| < (8/\deltaq)\cdot (1/2^i)
\]

Since $A_i$ is a $(1/2^i)$-approximation for $\pointset$ we have
\[
 |A_i\cap\query|\leq (1/2^i) \cdot |A_i| + (|\Sq|/|\pointset|)\cdot |A_i|
\]
and so
\[
\begin{array}{lll}
|A| & = & |A_i \cap \query| \\[2mm]
    & \leq & (1/2^i) \cdot |A_i| + (|\Sq|/|\pointset|)\cdot |A_i| \\[2mm]
    & \leq & (1/2^i) \cdot |A_i| + (8/\deltaq)\cdot (1/2^i)\cdot |A_i| \\[2mm]
    & \leq & (1/2^i) \cdot |A_i|\cdot (1+ 8/\deltaq) \\[2mm]
    & =  & O(\log^4 n \cdot \polylog(\log n))\cdot O(1/\deltaq) \\[2mm]
    & =  & O((1/\deltaq)\cdot \log^4 n \cdot \polylog(\log n))\\
\end{array}
\]

Since a $\delta'$-approximation
of a $\delta''$-approximation of a set $P$ is a $(\delta'+\delta'')$-approximation
of $P$, we see that the set $\Aq$ computed in Step~\ref{step:delta-sample} is a
$\deltaq$-approximation, as required.
The time needed for Step~\ref{step:delta-sample} is
$O( (|A|/\deltaq^3)\cdot\polylog (1/\deltaq))$, which is
\[
O( (1/\deltaq)^4 \cdot \log^4 n \cdot \polylog (\log n/\deltaq)).
\]
\end{proof}

The only thing left is now an algorithm $\pckc(R\cup\Aq,k,U)$ that solves the 0/1-weighted
version of the capacitated rectilinear $k$-center problem. Here we use the following
straightforward approach. Let $m := |R\cup\Aq|$.
First we observe that at least one square in an optimal solution
has points on opposite edges. Hence, to find the optimal size we can do a binary
search over $O(m^2)$ values, namely the horizontal and vertical distances between
any pair of points. Moreover, given a target size~$s$ we can push al squares such
that each has a point on its bottom edge and a point on its left edge. Hence, to test
if there is a solution of a given target size~$s$, we only have to test $O(m^{2k})$
sets of $k$~squares. To test such a set~$\C = \{b_1,\ldots,b_k\}$ of squares,
we need to check if the squares cover all points in $R\cup \Aq$ and if we can assign
the points to squares such that the capacity constraint is met.
For the latter we need to solve a flow problem, which can be done in $O(m^2k)$ time.
More precisely, given a set $\C=\{b_1,\ldots,b_k\}$ of $k$ squares, a set $P$ of $m$ points,
and a capacity upper bound~$U$, and we have to decide if we can assign each point in $P$ to
a square in $\C$ containing it such that no square in $\C$ is assigned more than~$U$ points.
We can model this as a flow problem in a standard manner. For completeness we
describe how this is done.

We construct a flow network with source $s$ and sink $t$, and
one vertex $v_p$ for each point $p\in\Aq$ and one vertex $u_i$ for each square $b_i$.
We add the following edges.
\begin{enumerate}
\item For each $v_p$, we add one edge with capacity $1$ from $s$ to $v_p$.
\item For each $u_i$ we add one edge with capacity $|U|$ from $u_i$ to $t$.
\item For each pair $(p,b_i)$ where $p\in\Aq\cap b_i$ add an edge with capacity $1$ from $v_p$ to $u_i$.
\end{enumerate}

We solve the flow problem using the Ford-Fulkerson algorithm which works in
$O(|E| \cdot |f|)$ time, where
$|E|$ is the number of the edges and $|f|$ is maximum flow value.
In our problem, $|E|=O(mk)$ and $|f|=|U|\le m$, which results in an $O(m^2k)$ time bound.

Thus each step in the binary search takes $O(m^{2k+2} k)$, leading to
an overall time complexity for $\pckc(R\cup\Aq,k,U)$ of $O(m^{2k+2} k\log m)$,
where $m=|R\cup\Aq| = O\left( k/\eps^2 + (1/\deltaq)\cdot\polylog (1/\deltaq) \right)$,
where $\deltaq = \Theta(\delta/k^3)$.
\medskip

The following theorem summarizes the results in this section.
\begin{theorem}
Let $\pointset$ be a set of $n$ points in $\Reals^2$.
There is a data structure that uses $O(n\log n)$ storage
such that, for a query range~$\query$ and query values $k\geq 2$, $\eps>0$ and $\delta>0$,
we can compute a $(1+\eps,1+\delta)$-approximate answer to a rectilinear $k$-center query in
$O^*( (k/\eps)\log n  + ((k^3/\delta)\log n)^4 + ( k/\eps^2 + (k^3/\delta) )^{2k+2} )$
time, where the $O^*$-notation hides $O(\polylog (k/\delta))$ factors.
\end{theorem}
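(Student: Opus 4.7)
The plan is to verify each step of $\CapacitatedClusterQuery$ separately, using the supporting lemmas developed in this section, and then add up the costs.

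For the \textbf{storage bound}, I would combine three families of preprocessing structures: (i) the compressed octree $\tree(\pointset)$ together with its centroid-decomposition tree and the range tree $\ds$ from Section~\ref{sec:approx}, which use $O(n\log n)$ space; (ii) the $O(n)$-space orthogonal range-counting structure of Chazelle; and (iii) the $\log n$ approximations $A_1,\ldots,A_{\log n}$ of Phillips, each stored in a range tree with fractional cascading. Since $|A_i|=O(2^i\cdot\polylog(2^i))$, the sum $\sum_i |A_i|\log|A_i|$ telescopes to $O(n\log n)$ up to polylogarithmic factors absorbed in the $O^*$-notation.

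For the \textbf{query time}, I would walk through the five steps of $\CapacitatedClusterQuery$ and match each against the theorem's three additive terms. Steps~\ref{step:cap-lb}--\ref{step:cap-net} invoke the Phase~1/Phase~2 routine of Algorithm~\ref{alg:net} with $d=2$, $c=1/(2\sqrt 2)$, and $f(k)=1$; by Lemma~\ref{le:correctness2} they produce a weak $r$-packing of size $O(k/\eps^2)$ in $O((k/\eps)\log n+(k/\eps)^2)$ time, accounting for the first additive term. Step~\ref{step:cap-approx} runs $\DeltaSample(\query,\deltaq)$ with $\deltaq=\delta/(16k^3)$; Lemma~\ref{le:sample-correctness} then gives a $\deltaq$-approximation $\Aq$ of size $O((k^3/\delta)\polylog(k^3/\delta))$ within $O((k^3/\delta)^4\log^4 n\cdot\polylog(\log n/\deltaq))$ time, which is exactly the second term (up to the $O^*$ polylogs). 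Step~\ref{step:cap-slow} applies $\pckc$ on $m:=|R\cup\Aq|=O(k/\eps^2+(k^3/\delta)\polylog(k^3/\delta))$ points; the binary-search-plus-flow algorithm described in the text runs in $O(m^{2k+2} k\log m)$ time, yielding the third additive term. Finally, the expansion in Step~\ref{step:cap-expand} costs $O(k)$, and the actual assignment of $\Sq$ to the expanded squares described in Lemma~\ref{le:cap-correctness} takes $O(k^2+n_Q\log n_Q)$, which is absorbed.

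For \textbf{correctness}, Lemma~\ref{le:cap-value} shows that an optimal solution for $\Sq$ remains feasible for the $0/1$-weighted instance $(R\cup\Aq,k,U)$ with $U=(1+\delta/2)\alpha|\Aq|/k$, so the squares returned by $\pckc$ have size at most $\Opt_k(\Sq,\alpha)$, and after expanding by $r=\eps\cdot\LB/f(k)\le\eps\cdot\Opt_k(\Sq,\alpha)$ their size is at most $(1+\eps)\Opt_k(\Sq,\alpha)$. Lemma~\ref{le:cap-correctness} shows that the expanded squares cover $\Sq$ and admit a partition violating the capacity by at most $(1+\delta)$; the choice $\deltaq=\delta/(16k^3)$ is precisely what is needed to absorb the $8k^2\cdot\deltaq$ error term inside $(\delta/(2k))$ in that proof.

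The main obstacle is bookkeeping rather than a new idea: one must check that $\deltaq=\Theta(\delta/k^3)$ is small enough for Lemma~\ref{le:cap-correctness} yet not so small that Step~\ref{step:cap-approx} or Step~\ref{step:cap-slow} dominates beyond the claimed bound, and one must carefully absorb the many $\polylog(k/\delta)$ and $\polylog(\log n)$ factors into the $O^*$ notation. Once these are checked, summing the three contributions gives exactly
\[
O^*\!\left(\tfrac{k}{\eps}\log n+\left(\tfrac{k^3}{\delta}\log n\right)^{4}+\left(\tfrac{k}{\eps^{2}}+\tfrac{k^{3}}{\delta}\right)^{2k+2}\right),
\]
as claimed.
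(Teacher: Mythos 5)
Your proposal is correct and follows essentially the same route as the paper, which proves this theorem simply by assembling Lemmas~\ref{le:correctness2}, \ref{le:sample-correctness}, \ref{le:cap-correctness} and~\ref{le:cap-value} with the $O(m^{2k+2}k\log m)$ analysis of \pckc, exactly as you do. Two immaterial nitpicks: the paper explicitly \emph{excludes} the $O(k^2+|\Sq|\log|\Sq|)$ assignment cost from the stated query time rather than absorbing it, and the storage sum over the $A_i$ is dominated by its largest term rather than telescoping.
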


Note that for constant~$k$ and $\eps=\delta$ the query time simplifies to
$O^*((1/\eps^4)\log^4 n + (1/\eps)^{4k+4})$.
Also note that the time bound stated in the theorem only includes the time to compute the set
of squares defining the clustering. If we want to also report an appropriate assignment of points
to the squares, we have to add an $O(k^2+|\Sq|\log|\Sq|)$ term; see Lemma~\ref{le:cap-correctness}.
\medskip

\emph{Remark.}
The algorithm can be generalized to the rectilinear $k$-center problem in higher dimensions,
and to the Euclidean $k$-center problem; we only need to plug in an appropriate
appropriate $\delta$-approximation algorithm and an appropriate algorithm for the
0/1-weighted version of the problem.

\section{Exact \texorpdfstring{$k$}{k}-Center Queries in \texorpdfstring{$\Reals^1$}{1D}}\label{sec:1d}
In this section we consider $k$-center queries in~$\Reals^1$. Here we are given
a set $\pointset$ of $n$ points in $\Reals^1$
that we wish to preprocess into a data structure such that,
given a query interval $\query$ and a natural number~$k\geq 2$, we can
compute a set $\C$ of at most $k$ intervals of the same length that together cover all points in
$\Sq :=\pointset\cap\qrange$ and whose length is minimum.
We obtain the following result.
\begin{theorem}\label{thm:1d}
Let $\pointset$ be a set of $n$ points in $\Reals^1$.
There is a data structure that uses $O(n)$ storage
such that, for a query range~$\query$ and query value $k\geq 2$,
we can answer a rectilinear $k$-center query in $O(\min(k^2\log^2 n, 3^k\log n))$ time.
\end{theorem}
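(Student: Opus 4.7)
The plan rests on the classical reduction of the $1$-dimensional $k$-center problem to a partitioning problem. Sort once, so $\pointset=(p_1<p_2<\dots<p_n)$. For a query interval $\qrange$, the set $\Sq$ is a contiguous subsequence $p_i,\dots,p_j$, and any feasible set of $k$ equal-length covering intervals can be shifted so that each interval starts at its leftmost covered point; hence $L^{\ast}:=\Opt_k(\Sq)$ equals the minimum, over partitions of $p_i,\dots,p_j$ into $k$ consecutive blocks, of the maximum block span $p_{b_\ell}-p_{a_\ell}$. The data structure is just the sorted sequence augmented with a balanced search tree supporting $O(\log n)$-time predecessor/successor queries; this fits in $O(n)$ storage.

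For the $O(k^2\log^2 n)$ bound I would run a standard greedy feasibility test inside an outer binary search. Given a candidate length $L$, the greedy rule ``start the first interval at $p_i$; jump to the successor of $p_i+L$; repeat'' takes $O(k\log n)$ time via predecessor queries and correctly decides whether $k$ intervals of length $L$ cover $\Sq$. To locate $L^{\ast}$, observe that $L^{\ast}=p_b-p_a$ for some $a,b\in[i,j]$, i.e.\ $L^{\ast}$ lies in the implicit matrix $M[a,b]=p_b-p_a$, which is sorted along every row and column. A Frederickson--Johnson style binary search over this sorted matrix, using the greedy feasibility test as the comparison oracle, converges in $O(k\log n)$ oracle calls (one may think of it as $O(k)$ sweeps, each refining a $\log n$-deep binary search). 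Since each call costs $O(k\log n)$, the total query time is $O(k^2\log^2 n)$.

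For the $O(3^k\log n)$ bound I would design a recursion on $k$ of shape $T(k)=3\,T(k-1)+O(\log n)$. The intended structural handle is that in any optimal solution the leftmost interval ends between two specific consecutive points $p_b,p_{b+1}$ of $\Sq$, after which the remaining $k{-}1$ intervals optimally cover the tail $p_{b+1},\dots,p_j$ with the same length $L^{\ast}$. I would recursively compute the $(k{-}1)$-center value $L'$ of a candidate tail, and then branch into three cases according to whether the leftmost block's span $p_b-p_i$ is less than, equal to, or greater than $L'$; each case either fixes $b$ outright or shifts the tail boundary by one predecessor/successor query. Each level performs $O(\log n)$ work and spawns at most three subcalls with one fewer interval, giving the claimed $O(3^k\log n)$ bound; combining with the previous algorithm yields $O(\min(k^2\log^2 n,\,3^k\log n))$.

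The main obstacle is the second algorithm: proving that the recursion really has branching factor three (rather than $\Theta(k)$ for the number of possible split points) and that the three subproblems can be generated in $O(\log n)$ time without recomputing structural information. This boils down to a clean structural lemma showing that, given the optimum $(k{-}1)$-center value of a tail, the position of the first block in an optimum $k$-partition is determined up to a $3$-way choice that is locally resolvable by one predecessor query. The remaining ingredients---the greedy feasibility test, the balanced-tree queries, and the parametric binary search used for the $k^2\log^2 n$ branch---are standard and follow directly from the $O(\log n)$-time predecessor structure on $\pointset$.
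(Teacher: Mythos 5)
Your first algorithm is essentially sound and close to the paper's: the paper also uses a greedy $O(k\log n)$-time feasibility test (\Decider) inside a binary search, but it searches directly for the smallest index $i^*$ such that $k$ intervals of length $p_{i^*}-x$ suffice, concludes that the first optimal interval covers exactly $\{p_i,\ldots,p_{i^*-1}\}$, and recurses with $k-1$ on the tail $\{p_{i^*},\ldots,p_j\}$ --- giving $k$ rounds of $O(k\log^2 n)$ each. Your framing via Frederickson--Johnson search in the sorted matrix $M[a,b]=p_b-p_a$ is shakier: the standard sorted-matrix selection needs $O(|\Sq|)$ additional work per round (or overall) to maintain candidate submatrices, which would break the query bound, and the claim of ``$O(k\log n)$ oracle calls'' is not a standard consequence of that technique. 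Your parenthetical ``$O(k)$ sweeps, each a $\log n$-deep binary search'' is in fact the paper's argument; you should make that the argument rather than an aside.

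The $O(3^k\log n)$ branch has a genuine gap, which you partly flagged yourself. Your proposed recursion $T(k)=3T(k-1)+O(\log n)$ hinges on a structural lemma (the leftmost block is determined up to a locally resolvable three-way choice given the tail's $(k-1)$-center value) that you do not prove, and it is also circular: the tail depends on where the first block ends, and ``shifting the boundary by one predecessor query'' per branch could require $\Theta(n)$ shifts, not three. The paper's idea is different: partition $\qrange$ into $k$ \emph{equal-length} subintervals and prove a \emph{fair split point} lemma --- at least one of the $k-1$ geometric split points $s_1,\ldots,s_{k-1}$ lies in the interior of no optimal interval and has exactly $i$ optimal intervals to its left when $s_i=x+i(x'-x)/k$. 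The algorithm then tries all $k-1$ split points, recursing on $([x,s_i],i)$ and $([s_i,x'],k-i)$, and the branching factor is not $3$ at all; rather the recurrence $T(k)\le O(\log n)+2\sum_{i=1}^{k-1}T(i)$ telescopes to $T(k)\le 3\,T(k-1)$, whence $O(3^k\log n)$. Without the fair-split-point lemma (or an equivalent device that bounds the number of candidate first-block boundaries independently of $n$), your second algorithm does not go through.
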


The rest of the section is dedicated to the proof of the theorem.
Our data structure is simply a sorted array on the points in~$\pointset$ and therefore
it needs only $O(n)$ space, but it has two different query algorithms.
We call the query algorithms \emph{a query algorithm for large $k$} and
\emph{a query algorithm for small $k$}.
(See Section~\ref{subsec:largeK} and Section~\ref{subsec:smallK}.)
Both query algorithms start by shrinking the query interval $\query$ such that
its left and right endpoints coincide with a point in $\Sq$.
This can obviously be done in $O(\log n)$ time.
With a slight abuse of notation we still denote the shrunk interval by $\query$.
Let $x,x'$ be its left and right endpoints, respectively, so $\query=[x,x']$.

\subsection{A Query Algorithm for Large~$k$}\label{subsec:largeK}
This query algorithm uses a subroutine \Decider which, given an interval
$\query'$, a length~$\maxlen$ and integer $\ell\leq k$, can decide in $O(\ell\log n)$
time if all points in $\pointset\cap\query'$ can be covered by $\ell$
intervals of length~$\maxlen$. The global query algorithm then performs
a binary search, using \Decider as subroutine, to find a pair
of points $p_i,p_{i+1}\in\Sq$ such that the first interval in an optimal solution
covers $p_i$ but not $p_{i+1}$.
Then an optimal solution is found recursively for $k-1$ clusters within the query
interval $\query\cap [p_{i+1},\infty)$.
Next we describe the procedure \Decider.

\smallpara{The \Decider-procedure.}
The procedure \Decider takes as input an
integer $\ell$, a number $\maxlen$, and an interval $\query'=[a,a']$.
It returns \textsc{yes} if $\query'$ can be covered by at most $\ell$ subintervals of
length $\maxlen$, and \textsc{no} otherwise.
\Decider works as follows. Use binary search to find the first
point $p_i\in\pointset\cap\query'$ not covered by the interval $[a:a+\maxlen]$,
set $a := p_i$ and recurse. This continues until
either all points in $\pointset\cap\query'$ are covered, or more than $\ell$
intervals are used.
The \Decider runs in $O(\ell \cdot\log n)$ time and
outputs \textsc{yes} in the first case and outputs \textsc{no} in the
latter case.

\smallpara{The global query algorithm.}
Given $\qinterval$ and an integer $k$, we handle a query as follows.
Let $\Sq := \{p_i,\ldots,p_j\}$, where the points are numbered from left to right.
Thus $x=p_i$ and $x'=p_j$.
We do a binary search on $\{p_i,\ldots,p_j\}$ to find the smallest index
$i^*$ with $i\leq i^*\leq j$ such that $\Sq$ can be covered by $k$ intervals of
length $\maxlen := p_{i^*} - x$. Each decision in the binary search takes
$O(k\log n)$ time by a call to \Decider, so the entire binary search takes
$O(k\log^2 n)$ time.

Let $\Opt_k(P)$ denote the minimum interval length needed
to cover the points in a set $P$ by $k$ intervals. After finding $i^*$
we know that
\[
p_{i^*-1}-x < \Opt_k(\Sq) \leq p_{i^*}-x.
\]

If $\Opt_k(\Sq) < p_{i^*}-x$, then the first interval in an optimal solution
covers $\{p_i,\ldots,p_{i^*-1}\}$ and the remaining intervals cover~$\{p_{i^*},\ldots,p_{j}\}$.
Now we recursively compute $\Opt_{k-1}(\{p_{i^*},\ldots,p_{j}\})$, and since
\[
p_{i^*-1}-x <\Opt_{k-1}(\{p_{i^*},\ldots,p_{j}\})\leq p_{i^*}-x,
\]
we can safely report $\Opt_k(\Sq) = \Opt_{k-1}(\{p_{i^*},\ldots,p_{j}\})$.

It remains to analyze the running time of a query. The binary search takes
$O(k\log^2 n)$ times, after which we do a recursive call in which the
value of $k$ has decreased by~1. (The problem is easily
solved in $O(\log n)$ time when $k=1$.) Hence the number of recursive calls is $k$, leading
to an $O(k^2 \log^2 n)$ query time, as claimed.  Finding an optimal solution---and not
just the value of an optimal solution---can be done within the same time bound.
We get the following lemma.
\begin{lemma}\label{lem:1d-version1}
Let $\pointset$ be a set of $n$ points in $\Reals^1$.
There is a data structure that uses $O(n)$ storage
such that, for a query range~$\query$ and query value $k\geq 2$,
we can answer a rectilinear $k$-center query in $O(k^2\log^2 n)$ time.
\end{lemma}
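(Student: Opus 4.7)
The plan is to verify both correctness and the stated complexity of the recursive query algorithm described above. The $O(n)$ space bound is immediate: the data structure is just a sorted array on $\pointset$. The initial shrinking of $\qrange$ to its extreme points in $\Sq$ takes $O(\log n)$ by a pair of predecessor/successor queries, after which $\Sq = \{p_i,\dots,p_j\}$ with $x = p_i$ and $x' = p_j$. Moreover, we may assume without loss of generality that the leftmost interval of any optimal $k$-cover begins at~$x$, since sliding that interval leftward until it reaches $x$ does not change its set of covered points.

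I would first justify $\Decider(\qrange',\maxlen,\ell)$. On the line, the leftmost-first greedy strategy is optimal by a standard exchange argument: the leftmost interval of any valid cover can be slid leftward to start at the leftmost uncovered point, which only enlarges its covered set, and inducting on the residue completes the proof. Hence $\Decider$ returns \yes iff $\pointset\cap\qrange'$ can be covered by at most $\ell$ intervals of length $\maxlen$. Each of the at most $\ell$ greedy jumps is a single successor query on the sorted array at cost $O(\log n)$, yielding the $O(\ell\log n)$ bound.

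Next I would establish correctness of the top-level recursion. The binary search produces $i^*$ with $p_{i^*-1}-x < \Opt_k(\Sq)\le p_{i^*}-x$, since $\Decider$ returns \no for length $p_{i^*-1}-x$ and \yes for length $p_{i^*}-x$. The key structural claim is
\[
\Opt_{k-1}(\{p_{i^*},\dots,p_j\}) \;>\; p_{i^*-1}-x,
\]
proved by contradiction: otherwise, prepending $[x,p_{i^*-1}]$ to an offending $(k-1)$-cover of length $p_{i^*-1}-x$ would give a $k$-cover of $\Sq$ of length $p_{i^*-1}-x$, contradicting the minimality of $i^*$. Combining this with the observation that either the leftmost interval of some optimal cover stops strictly before $p_{i^*}$ (so its tail forms a $(k-1)$-cover of $\{p_{i^*},\dots,p_j\}$ of length $\Opt_k(\Sq)$, giving $\Opt_{k-1}(\{p_{i^*},\dots,p_j\}) \le \Opt_k(\Sq)$) or else $\Opt_k(\Sq) = p_{i^*}-x$ exactly, yields
\[
\Opt_k(\Sq) \;=\; \min\bigl(p_{i^*}-x,\; \Opt_{k-1}(\{p_{i^*},\dots,p_j\})\bigr).
\]
Recording the chosen cut index at each recursion level reconstructs an explicit cover within the same time bound.

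Finally, the running-time analysis is routine. Each recursion level performs one binary search of $O(\log n)$ rounds, each invoking $\Decider$ with $\ell=k$ at cost $O(k\log n)$, for $O(k\log^2 n)$ per level. The recursion reduces $k$ by one per call and terminates at $k=1$, which is handled in $O(\log n)$ by a single $\Decider$ call, giving at most $k$ levels and a total query time of $O(k^2\log^2 n)$. I expect the main obstacle to be formalising the structural claim so that the boundary case $\Opt_k(\Sq)=p_{i^*}-x$ is handled cleanly; everything else is a direct unrolling of the stated algorithm.
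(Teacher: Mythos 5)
Your proposal is correct and follows the same route as the paper: a sorted array, the greedy \Decider subroutine, a binary search for the critical index $i^*$ with $p_{i^*-1}-x<\Opt_k(\Sq)\le p_{i^*}-x$, and recursion with parameter $k-1$ on the suffix $\{p_{i^*},\dots,p_j\}$. The one point where you genuinely add something is the recurrence itself. The paper reports $\Opt_k(\Sq)=\Opt_{k-1}(\{p_{i^*},\dots,p_j\})$ outright, justified by the claim $p_{i^*-1}-x<\Opt_{k-1}(\{p_{i^*},\dots,p_j\})\le p_{i^*}-x$; but the upper bound in that claim can fail in the boundary case $\Opt_k(\Sq)=p_{i^*}-x$. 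For example, with $\Sq=\{0,10,15,25\}$ and $k=2$ one gets $i^*=2$ (length $10$ suffices, length $0$ does not), yet $\Opt_1(\{10,15,25\})=15$ while $\Opt_2(\Sq)=10$, so returning the recursive value alone would be wrong. Your formula $\Opt_k(\Sq)=\min\bigl(p_{i^*}-x,\;\Opt_{k-1}(\{p_{i^*},\dots,p_j\})\bigr)$ handles both cases correctly, by exactly the dichotomy you sketch: if some optimal cover's first interval ends before $p_{i^*}$, its tail gives $\Opt_{k-1}\le\Opt_k$ and the prepending argument gives the reverse inequality; otherwise $\Opt_k(\Sq)=p_{i^*}-x$ and prepending $[x,p_{i^*-1}]$ forces $\Opt_{k-1}\ge p_{i^*}-x$. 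The only cosmetic slip is directional: in the exchange argument the interval covering the leftmost uncovered point is slid \emph{rightward} (its left endpoint is at most that point), not leftward; this does not affect the argument.
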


\subsection{A Query Algorithm for Small~$k$}\label{subsec:smallK}
Here we present the second query algorithm of the data structure, which is more
efficient for small values of~$k$. We begin with the following definition.

\begin{definition}\label{def:fair-split}
    Let $\Sq$ be a set of points inside a query interval~$\query=[x,x']$,
    such that $x,x'\in\Sq$. We call a point $r\in\query$ a \emph{fair split point} if
    there is an optimal solution $\clusteringopt(\query) := \{I_1,I_2,\ldots,I_k\}$
    for the $k$-center problem on $\Sq$  such that
     \begin{enumerate}
     \item[(i)] $r$ does not lie in the interior of any interval $I_j\in \clusteringopt(\query)$,
     and
     \item[(ii)] the number of intervals in $\clusteringopt(\query)$ lying to the left of $r$ is $k(r-x)/(x'-x)$.
     \end{enumerate}
\end{definition}	

Note that the split point $r$ is not necessarily a point in $\Sq$, that is, it is not one of
the given points. The following lemma is crucial in our analysis.
\begin{lemma}\label{lem:fair-split}
Let $\splitset:=\{s_1,s_2,\ldots,s_{k-1}\}$ denote the set of points that partition $\query$ into
$k$ equal-size subintervals. Then at least one of the points of $\splitset$ is a fair split point.
\end{lemma}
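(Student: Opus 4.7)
The plan is to reduce Lemma~\ref{lem:fair-split} to a clean structural statement about the optimal clusters, and then discharge that statement by a short contradiction that relies on a single inequality: every cluster has diameter at most $(x'-x)/k$.

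First I would establish the key bound $L^*:=\Opt_k(\Sq)\le (x'-x)/k$. This follows because the $k$ subintervals $[s_{l-1},s_l]$ (with the convention $s_0:=x$ and $s_k:=x'$) already tile $[x,x']$ and hence cover $\Sq$. Fixing now any optimal clustering whose clusters, sorted left-to-right, are $C_1,\ldots,C_k$, and writing $m_j:=\min C_j$ and $M_j:=\max C_j$, we have $m_1=x$, $M_k=x'$, and $M_j-m_j\le L^*\le(x'-x)/k$ for every $j$.

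Next I would reduce the lemma to the claim that there exists $i\in\{1,\ldots,k-1\}$ with $M_i\le s_i\le m_{i+1}$. Granted this, I would shift each $I_j$ with $j\le i$ so that its right endpoint coincides with $M_j$ (feasible because $I_j$ has length $L^*\ge M_j-m_j$); its right endpoint is then $\le M_i\le s_i$. Symmetrically, each $I_j$ with $j>i$ is shifted so its left endpoint equals $m_j\ge m_{i+1}\ge s_i$. The resulting optimal solution places $s_i$ outside the interior of every $I_j$ and has exactly $i$ intervals to the left of $s_i$, matching $k(s_i-x)/(x'-x)=i$, so $s_i$ is a fair split point.

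To prove the reduced claim I would argue by contradiction. Assume $s_i\notin[M_i,m_{i+1}]$ for every $i\in\{1,\ldots,k-1\}$; then each such $i$ falls into exactly one of two cases, L$_i$ meaning $s_i<M_i$, and R$_i$ meaning $s_i>m_{i+1}$. Case L$_1$ would force $M_1-m_1>s_1-x=(x'-x)/k$, violating the diameter bound on $C_1$, so R$_1$ must hold; symmetrically R$_{k-1}$ would give $M_k-m_k>x'-s_{k-1}=(x'-x)/k$, so L$_{k-1}$ must hold. Since the label sequence starts with R at $i=1$ and ends with L at $i=k-1$, somewhere it switches, yielding an index $i^*$ with R$_{i^*}$ and L$_{i^*+1}$; unpacking these gives $m_{i^*+1}<s_{i^*}$ and $M_{i^*+1}>s_{i^*+1}$, hence $M_{i^*+1}-m_{i^*+1}>s_{i^*+1}-s_{i^*}=(x'-x)/k$, again violating the diameter bound on $C_{i^*+1}$.

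The only genuinely delicate point is pinning down $L^*\le(x'-x)/k$ at the outset, since this is the single inequality that makes the split-point spacing and the per-cluster diameter bound align exactly; once it is in hand, the boundary cases at $i=1$ and $i=k-1$ pin down both ends of the L/R pattern and the contradiction at the forced R$\to$L transition follows with essentially no further work.
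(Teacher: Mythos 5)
Your proof is correct, but it takes a noticeably different route from the paper's. The paper fixes an arbitrary optimal solution $\clusteringopt(\query)$ and argues in two stages: first, a pigeonhole count on the $k+1$ points $s_0,\ldots,s_k$ (each interval of length $<(x'-x)/k$ contains at most one of them, and two intervals are spent on $s_0,s_k\in\Sq$) shows some $s_i$ avoids every interval's interior; second, for the quantity $f_i=\ell_i-i$ (intervals left of $s_i$ minus $i$) it shows $f\geq 0$ at the first such split point, $f\leq 0$ at the last, and $f$ drops by at most one between consecutive ones, so some $f_{i_j}=0$. You instead reason about the clusters rather than the intervals: you locate a split point inside a gap $[M_i,m_{i+1}]$ between consecutive clusters via an L/R labelling whose forced values at $i=1$ and $i=k-1$ and whose forced switch each contradict the per-cluster diameter bound $M_j-m_j\leq\Opt_k(\Sq)\leq(x'-x)/k$, and you then \emph{construct} a witnessing optimal solution by sliding each interval flush against its cluster's extreme point. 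This buys you a single unified intermediate-value argument that delivers conditions (i) and (ii) of Definition~\ref{def:fair-split} simultaneously, avoiding the paper's case split between one and several uncovered split points; it also makes explicit use of the existential quantifier in the definition (``there is an optimal solution''), which the paper never exploits since it never modifies the fixed solution. Two small points you leave implicit and should state: that an optimal covering by $k$ unit-length intervals can always be converted into $k$ consecutive, nonempty clusters $C_1,\ldots,C_k$ sorted left to right (true by a greedy left-to-right assignment when $|\Sq|\geq k$, with the case $|\Sq|<k$ handled trivially), and that for $k=2$ the ``switch'' degenerates into the single index $i=1$ being forced to carry both labels, which is already the desired contradiction.
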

\begin{proof}
	First we prove that there exists a point in $\splitset$ that does not lie in
    the interior of some $I_j\in \clusteringopt(\query)$.
    To this end, we observe that if the length of optimal intervals equals $(x'-x)/k$,
    then the optimal solution
    is equal to the subdivision of $\query$ defined by the split points, and so the
    lemma trivially holds.
    Otherwise, the length of optimal intervals is strictly smaller than $(x'-x)/k$.
    But then an interval in
    $\clusteringopt(\query)$ can contain at most one point from $\{s_0,\ldots s_k\}$, where
    $s_0:= x$ and $s_k:=x'$.
    Since $s_0$ and $s_k$ are points in $\Sq$, there is an interval in $\clusteringopt(\query)$
    containing $s_0$ and one containing $s_k$. Hence, the remaining $k-2$ intervals in
    $\clusteringopt(\query)$ can cover at most $k-2$ points from the split points
    $\{s_1,\ldots s_{k-1}\}$ and so at least one of the split points will not be
    covered by the union of the subintervals.

    It remains to prove that for at least one of the points of $\splitset$ that satisfies
    Condition~(i) in Definition~\ref{def:fair-split}, it also satisfies Condition~(ii)
    in Definition~\ref{def:fair-split}.
    First consider the case that there is only one $s_i$ with $0 < i < k$ that is not
    the interior of any $I_j$. Let $\ell_i := |\{ I_j : I_j \subset [s_0, s_i]\}|$ denote
    the number of intervals to the left of $s_i$, and let and $f_i := \ell_i-i$.
    Since all the $s_j$ with $s_0 \leq s_j<s_i$ are contained in distinct intervals from
    $\clusteringopt(\query)$, we have $f_i \geq 0$. But since the same holds for all $s_j$
    with $s_i < s_j \leq s_k$, the number of intervals to the right of $s_i$ is at least~$k-i$.
    Hence, $f_i \leq k - (k-i) -i = 0$. We conclude that $f_i = 0$, so $s_i$ is a fair split
    point.

    Next we consider the case that several $s_i$ are not in the interior of any $I_j$.
    Let $0 < i_1 < \ldots < i_m < k$ be the corresponding indices. By the same arguments as above
    we have $f_{i_1} \geq 0$ and $f_{i_m} \leq 0$. Furthermore the sequence $\ell_i$ is non-decreasing,
    which implies $f_{i_j + 1} \geq f_{i_j}-1$. As a consequence, there is an $i_j$ with $f_{i_j} = 0$.
    It follows that $s_{i_j}$ is a fair split point.
\end{proof}

Lemma~\ref{lem:fair-split} suggests the following approach.
Again, the data structure is just a sorted array on the points in $\pointset$.
  A query with range $\query=[x,x']$ and parameter $k$ is answered as follows.
  Search the array for the successor $s(x)$ of $x$ and the predecessor
  $p(x')$ of $x'$ in $\pointset$. Replace $\query$ with $[s(x),p(x')]$, so that
  the left and right endpoints of the modified range $\query$ are points from~$\pointset$.
  Partition $\query$ into $k$ equal-size subintervals.
  At each split point $s_i$ of $\query$, recursively solve the problem on
  $\query_{\text{left}} := [x,s_i]$ with parameter $k_\text{left} := i$ and on
  $\query_{\text{left}} := [s_i,x']$ with parameter $k_\text{right} := k-i$.
  By Lemma~\ref{lem:fair-split}, at (at least) one of the split points of $\query$
  the union of the returned intervals is an optimal solution. Moreover, we can easily
  maintain the best solution as we try all split points, so that after trying
  all split points we can return an optimal solution.

  The recursion ends when $k=1$. In this case we report $[s(x),p(x')]$ as the optimal solution.
We obtain the following result.
\begin{lemma}\label{lem:1d-version2}
Let $\pointset$ be a set of $n$ points in $\Reals^1$.
There is a data structure that uses $O(n)$ storage
such that, for a query range~$\query$ and query value $k\geq 2$,
we can answer a rectilinear $k$-center query in $O(3^k\log n)$ time.
\end{lemma}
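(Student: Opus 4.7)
The plan is to verify the three claims of the lemma (space, correctness, query time) for the recursive algorithm described just above the statement. The space bound is immediate: the data structure is just a sorted array storing $\pointset$, which uses $O(n)$ space and supports predecessor/successor queries in $O(\log n)$ time.

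For correctness, I would induct on $k$. The base case $k=1$ is clear: after shrinking $\query$ to $[s(x),p(x')]$, this single interval is the shortest interval covering $\Sq$. For the inductive step, Lemma~\ref{lem:fair-split} guarantees that at least one split point $s_i\in\splitset$ is fair, so there is an optimal $k$-clustering of $\Sq$ in which exactly $i$ intervals lie in $[x,s_i]$ and the remaining $k-i$ lie in $[s_i,x']$, with $s_i$ in the interior of none of them. Consequently, for that particular $i$,
\[
\Opt_k(\Sq) \;=\; \max\bigl(\Opt_i(\pointset\cap[x,s_i]),\ \Opt_{k-i}(\pointset\cap[s_i,x'])\bigr).
\]
Conversely, for any split point $s_j$, concatenating an optimal $j$-cover of $\pointset\cap[x,s_j]$ with an optimal $(k-j)$-cover of $\pointset\cap[s_j,x']$ yields a valid $k$-cover of $\Sq$, so the maximum on the right is always an upper bound on $\Opt_k(\Sq)$. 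By the inductive hypothesis the recursive calls return the two sub-optima exactly, so taking the minimum over all $k-1$ split points of these maxima returns $\Opt_k(\Sq)$ together with a realizing family of $k$ intervals.

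For the running time, let $T(k)$ denote the worst-case query time with parameter $k$. At a given call we spend $O(\log n)$ to find $s(x)$ and $p(x')$ and $O(k)$ to compute the $k-1$ split points, then recursively solve $2(k-1)$ subproblems with parameters $1,2,\ldots,k-1$ on the left side and symmetrically on the right. This yields the recurrence
\[
T(k) \;=\; 2\sum_{i=1}^{k-1} T(i) \;+\; O(\log n + k), \qquad T(1)=O(\log n).
\]
Guessing $T(k)\le c\cdot 3^k\log n$ and using $\sum_{i=1}^{k-1}3^i = (3^k-3)/2$, the inductive step gives $T(k)\le c(3^k-3)\log n + O(\log n+k)\le c\cdot 3^k\log n$ for $c$ large enough, so $T(k)=O(3^k\log n)$ as required.

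The main (and really only) substantive obstacle is the correctness argument, which hinges entirely on Lemma~\ref{lem:fair-split}; once that lemma is in hand, the algorithm's correctness follows by a straightforward induction and the running time is a routine recurrence. I would also note that reporting the intervals (not just $\Opt_k$) adds no overhead, since the $O(k)$ realizing intervals can be carried along the recursion alongside the numeric optimum.
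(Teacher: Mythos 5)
Your proposal is correct and follows essentially the same route as the paper: the $O(n)$ sorted array, correctness via Lemma~\ref{lem:fair-split} applied at each level of the recursion, and the recurrence $T(k)=O(\log n)+2\sum_{i=1}^{k-1}T(i)$ solved to $O(3^k\log n)$. Your write-up is in fact somewhat more explicit than the paper's (the paper leaves the correctness induction implicit and solves the recurrence by the telescoping identity $T^*(k)=3T^*(k-1)$ rather than by a substitution guess), but there is no substantive difference.
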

\begin{proof}
  It takes $\runtime(\log n)$ time to find the successor and the predecessor of $x$
  and $x'$ in $\pointset$.
  Hence, we obtain the following recurrence for the time $T(k,n)$ needed to answer
  a $k$-center query on a point set of size~$n$:
  \[
  T(k,n) \ \leq \ \left\{ \begin{array}{ll}
                    O(\log n) & \mbox{if $k=1$} \\
                    O(\log n) + \sum_{i=1}^{k-1} T(i,n)+T(k-i,n) & \mbox{if $k>1$}
                    \end{array}
             \right.
\]
which solves to $T(n,k)=O(3^k \log n)$. To see this, note that the
for recurrence
\[
T^*(k) = \sum_{i=1}^{k-1} T^*(i)+T^*(k-i)
\]
we have
\[
T^*(k) = 2\sum_{i=1}^{k-1} T^*(i) = 3 T^*(k-1),
\]
so with $T^*(1)=1$ we obtain $T^*(k) = 3^{k-1}$, which implies $T(n,k)=O(3^k \log n)$.
\end{proof}

\section{Exact Rectilinear 2- and 3-Center Queries in \texorpdfstring{$\Reals^2$}{the Plane}}\label{sec:2d}
Suppose we are given a set $\pointset=\{\point_1,\point_2,\ldots,\point_n\}$ of $\setsize$
points in $\Reals^2$ and an integer $k$. In this section we build a data structure $\ds$ that
stores the set $\pointset$ and, given an orthogonal query rectangle~$\query$, can be used to quickly
find an optimal solution for the $k$-center problem on $\Sq:= \pointset\cap\query$ for $k=2$ or~3.

\subsection{2-Center Queries}\label{sec:k2}
We begin by a quick overview of our approach.
We start by shrinking the query range $\query$ such that each edge
of $\query$ touches at least one point
of $\pointset$. (The time for this step is subsumed by the time for the rest of the procedure.)
It is well known that if we want to cover $\Sq$
by two squares $\sigma,\sigma'$ of minimum size, then $\sigma$ and $\sigma'$ both
share a corner with $\query$
and these corners are opposite corners of $\query$. We say that $\sigma$ and $\sigma'$
are \emph{anchored} at the corner they share with~$\query$.
Thus we need to find optimal solutions for the two cases---$\sigma$ and $\sigma'$ are
anchored at the topleft and bottomright corner of $\query$, or at the  topright
and bottomleft corner---and return the better one.
Let $\corner$ and $\corner'$ be the topleft and the bottomright corners of~$\query$.
In the following we describe how to compute two squares $\sigma$ and $\sigma'$ of minimum size
that are anchored at $\corner$ and $\corner'$, respectively, and whose union covers~$\Sq$.
The topright/bottomleft case can then be handled in the same way.

\begin{figure}
  \centering
  \includegraphics[width=13cm]{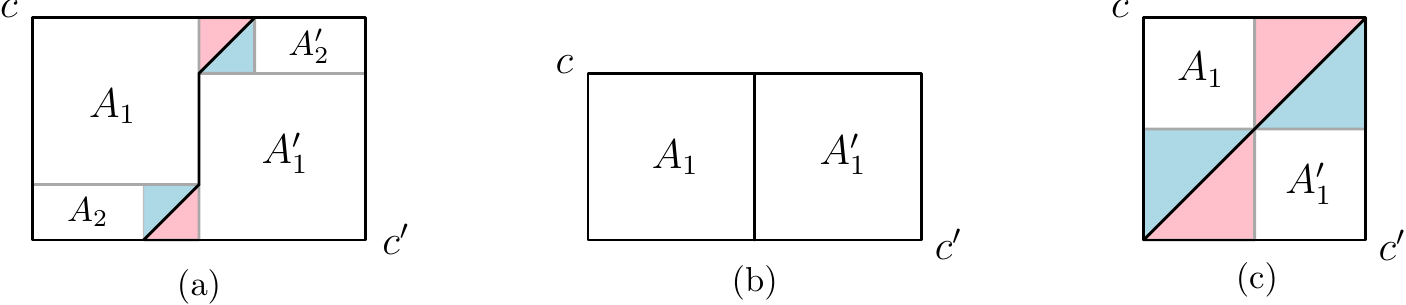}
    \caption{Various types of $\infmetric$-bisectors. The bisectors are shown in blue.
  (a): $\qrange$ is \q{fat}. The regions $A_j,A'_j$ for $j=1,2$ are shown with text.
  (b): $\qrange$ is \q{thin}.
  The regions $A_j$ and $A'_j$ for $j=2,3,4$ are empty.
  (c): $\qrange$ is a square. The regions $A_j$ and $A'_j$ for $j=2$ are empty.
  In both (a) and (c) regions $A_3,A'_3$ are colored in blue and $A_4,A'_4$ are colored
  in red.}
  \label{fig:L-infty-bisector}
\end{figure}

First we determine the $\infmetric$-bisector of $\corner$ and
$\corner'$ inside $\qrange$; see Figure~\ref{fig:L-infty-bisector}.
The bisector partitions $\qrange$ into regions $\region$ and $\region'$, that respectively have
$\corner$ and $\corner'$ on their boundary.
Obviously in an optimal solution (of the type we are focusing on),
the square $\sigma$ must cover $\Sq\cap \region$ and
the square $\sigma'$ must cover $\Sq\cap \region'$.
To compute $\sigma$ and $\sigma'$, we thus need to find the points $q\in\region$ and $q'\in\region'$
with maximum $\infmetric$-distance to the corners $\corner$ and $\corner'$, respectively.
To this end, we partition $\region$ and $\region'$ into subregions such that in each of the
subregions the point with maximum $\infmetric$-distance to its corresponding corner can be
found quickly via appropriate data structures discussed below. We assume w.l.o.g.
that the $x$-span of $\qrange$ is at least its $y$-span.
We begin by presenting the details of such a partitioning for Case~(a) of
Figure~\ref{fig:L-infty-bisector}---Case~(b) and Case~(c) can be seen as special cases
of Case~(a).

As Figure~\ref{fig:L-infty-bisector} suggests, we partition $\region$ and
$\region'$ into subregions. We denote these subregions by $A_j$ and $A'_j$,
for $1\leq j\leq 4$. From now on we focus on reporting the
point~$q\in\pointset$ in~$\region$ with maximum $\infmetric$-distance to $\corner$;
finding the furthest point from $\corner'$ inside~$\region'$ can be done similarly.
Define four points $\point(A_j)\in\pointset$ for $1\leq j\leq 4$ as follows.
\begin{itemize}
\item The point $\point(A_1)$ is the point of $\Sq$
      with maximum $\infmetric$-distance to $\corner$ in $A_1$.
      Note that this is either the point with maximum $x$-coordinate
      in $A_1$ or the point with minimum $y$-coordinate.
\item The point $\point(A_2)$ is a bottommost point in $A_2$.
\item The point $\point(A_3)$ is a bottommost point in $A_3$.
\item The point $\point(A_4)$ is a rightmost point in $A_4$.
\end{itemize}

Clearly
\begin{equation}
 q = \arg\max_{1\leq j\leq 4} \ \{\infdist(\point(A_j),\corner)\},
\end{equation}
where $\infdist(.)$ denotes the $\infmetric$-distance function.

\smallpara{Data structure.}   
Our data structure now consists of the following components.
\begin{itemize}
\item We store $\pointset$ in a data structure $\ds_1$ that allows us to
    report the extreme points in the $x$-direction and in the $y$-direction inside a
    rectangular query range. For this we use the structure by Chazelle~\cite{c-fads-88},
    which needs $O(n\log^{\delta} n)$ space and has $O(\log n)$ query time, where
    $\delta>0$ is an arbitrary small (but fixed) constant.
\item We store $\pointset$ in a data structure $\ds_2$ with two components.
    The first component should answer the following queries: given a
    $45^{\circ}$ query cone whose top bounding line is horizontal
    and that is directed to the left---we obtain such a cone when we
    extend the region $A_4$ into an infinite cone---, report the
    rightmost point inside the cone.
    The second component should answer similar queries for cones that are the
    extension of~$A_3$.
\end{itemize}

Lemma~\ref{lem:2d-k2-45cone} proves the existence of a linear-size data structure
that implements such a component and that has $O(\log n)$ query time.
\begin{lemma}\label{lem:2d-k2-45cone}
Each component of $\ds_2$ has complexity $\runtime(n)$ and it can be built in $\runtime(n \log n)$
time.
\end{lemma}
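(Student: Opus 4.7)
The plan is to reduce each $45^\circ$-cone query to a two-dimensional dominance range-maximum query via a coordinate change, and then use a persistent search-tree structure.

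I will describe the component for cones extending~$A_4$; the one for cones extending~$A_3$ is handled symmetrically by swapping the roles of the two axes. A query cone with apex $(a,b)$ contains a point $p_i=(x_i,y_i)\in\pointset$ exactly when $y_i\le b$ and $y_i-x_i\ge b-a$, and we want the $p_i$ maximizing $x_i$ among these. Applying the shear $(x,y)\mapsto(u,v):=(x-y,\,y)$ to every point of $\pointset$, the cone becomes the axis-aligned quadrant $\{(u,v):u\le a-b,\;v\le b\}$, while the original $x$-coordinate becomes the additive weight $w_i:=u_i+v_i$. Hence the task reduces to: among the transformed points lying in a south-west quadrant, return the one with maximum weight.

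For this 2D dominance max-weight problem I would use a sweep-and-persist approach in the style of Sarnak and Tarjan. Sort the points by~$v$ in increasing order and insert them one at a time into a persistent balanced BST keyed on~$u$, where each node stores the maximum weight in its subtree. Each insertion takes $O(\log n)$ time and allocates only $O(1)$ new persistent nodes, so the entire structure uses $O(n)$ space and is built in $O(n\log n)$ time, as claimed. A query $(a,b)$ would be answered by binary-searching the sorted $v$-values to locate the version of the BST created after all points with $v\le b$ have been inserted, and then performing a prefix-max query for $u\le a-b$ on that version; both steps take $O(\log n)$ time.

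The main obstacle is keeping the space linear rather than $O(n\log n)$: a naive persistent segment tree would inflate storage by a log factor. Sarnak--Tarjan-style node copying amortizes the cost of persistence to $O(1)$ extra nodes per update, which is precisely what gives the $O(n)$ bound. The shear is chosen so that the objective stays a fixed additive function $w=u+v$ of the two BST coordinates, which is what allows the tree to be augmented with subtree maxima of~$w$. The component for cones extending~$A_3$, where we seek a bottommost rather than rightmost point, is obtained by the analogous shear-and-persist construction with the roles of $x$ and $y$ exchanged.
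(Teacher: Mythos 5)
Your reduction is sound: the shear $(x,y)\mapsto(x-y,y)$ correctly turns the $45^\circ$-cone condition into a southwest-quadrant (dominance) condition with the original $x$-coordinate becoming the additive weight $u+v$, and a sweep over $v$ with a partially persistent search tree keyed on $u$ is a legitimate alternative to the paper's construction (the paper instead builds the explicit overlay of the inverted cones---an additively-weighted-Voronoi-like planar subdivision of linear complexity---and answers queries by point location). The gap is in your space bound. The Sarnak--Tarjan node-copying technique gives $O(1)$ amortized new nodes per \emph{memory modification}, and it yields $O(1)$ amortized space per \emph{update} for red-black trees only because an insertion into a plain red-black tree makes $O(1)$ amortized pointer changes. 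Once you augment every node with the maximum weight in its subtree, a single insertion can modify that field at all $\Theta(\log n)$ ancestors (insert the points in increasing order of weight, say: every insertion then updates the maximum stored at the root and at every node on its search path), and each such modification must be recorded persistently because a query on an older version must see the older maxima. Hence the structure as you describe it occupies $\Theta(n\log n)$ space in the worst case, which does not establish the claimed $O(n)$ complexity.

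The repair within your framework is to drop the subtree-max augmentation: during the sweep over increasing $v$, maintain only the current Pareto staircase, i.e.\ the points $p_i$ seen so far for which no already-seen $p_j$ has $u_j\le u_i$ and $w_j\ge w_i$. On this staircase $w$ is increasing in $u$, so a query becomes a predecessor search for $a-b$ in the appropriate version, with no aggregate fields at all. Each point is inserted once and deleted at most once, so there are $O(n)$ updates, each causing $O(1)$ amortized structural changes, and node copying then genuinely gives $O(n)$ space and $O(n\log n)$ construction time. (This staircase is precisely the sheared image of the linear-size cone overlay that the paper constructs directly.)
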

\begin{proof}
We describe the component for the following queries: given a
$45^{\circ}$ query cone whose bottom bounding line is horizontal
and that is directed to the right, report the leftmost point inside the cone.
Our structure for such queries is defined as follows. For each point
$p_i\in\point$ consider the inverted cone with apex~$p_i$, that is, the
$45^{\circ}$ cone whose top edge is horizontal. We now add these
inverted cones from right to left, where we add each cone ``on top of'' the
existing cones. This gives us a linear-size subdivision, which
is a Voronoi diagram for the distance function induced by our problem, which we
preprocess for point location. If we then do a point-location query
in the subdivision with the apex of our query cone, then this tells
us the leftmost point inside the query cone.

To construct the structure, we use a sweep-line approach.
The sweep line is a vertical line that moves from right to left. The sweep line halts at each
point $\point_i\in\pointset$, and computes the Voronoi cell of $\point_i$, denoted with
\vor($\point_i$), as the set of all the points in the plane that lie in the unbounded left
$45^\circ$-cone starting at $\point_i$.
If \vor($\point_i$) intersects \vor($\point_j$), for some $j<i$, then the region
\vor($\point_i$)$\subseteq$\vor($\point_j$) will belong to \vor($\point_i$).
Observe that \vor($\point_i$) can intersect at most one \vor($\point_j$) with $j<i$ and therefore
updating \vor($\point_i$) can be done easily.
See Figure~\ref{fig:45-voronoi-diagram} for a picture of execution of the algorithm for a
few successive iterations.
\begin{figure}
  \centering
  \includegraphics[width=12cm]{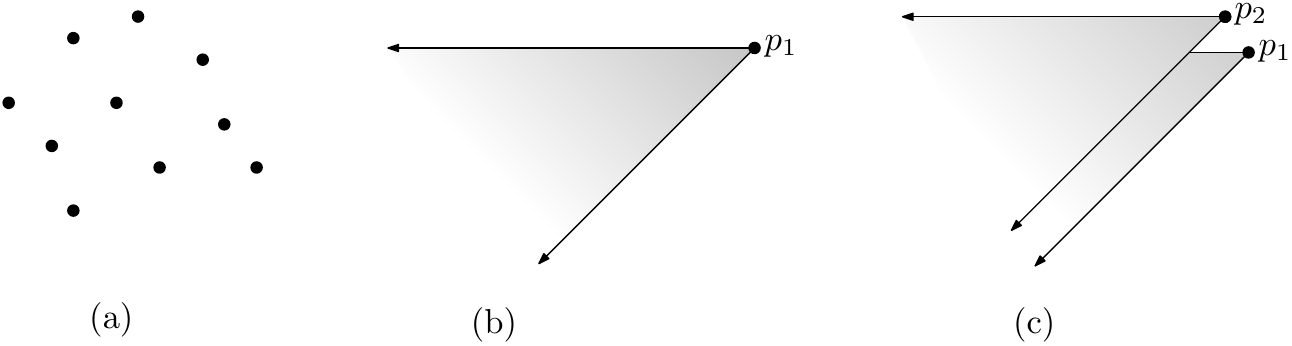}
  \caption{A point set $\pointset$ and the Voronoi cells of the first two points of $\pointset$
           visited by the sweep-line algorithm described in Lemma~\ref{lem:2d-k2-45cone}.}
  \label{fig:45-voronoi-diagram}
\end{figure}
\end{proof}

\smallpara{Query procedure. } 
Given an axis-aligned query rectangle $\qrange$, we first (as already mentioned) shrink the
query range so that each edge of $\qrange$ contains at least on point of $\pointset$.
Then compute the $\infmetric$-bisector of $\qrange$. Query $\ds_1$ with $A_1$ and
$A_2$, respectively, to get the points $\point(A_1)$ and $\point(A_2)$.
Then query $\ds_2$ with $\qpoint$ and $\qpoint'$ to get the points $\point(A_3)$ and
$\point(A_4)$, where $\qpoint$ and $\qpoint'$ are respectively the bottom and the top
intersection points of $\infmetric$-bisector of $\qrange$ and the boundary of $\qrange$.
Among the at most four reported points, take the one with maximum $\infmetric$-distance the corner
$\corner$. This is the point $q\in\Sq\cap\region$ furthest from~$\corner$.

Compute the point $q'\in\Sq\cap\region$ furthest from~$\corner'$ in a similar fashion.
Finally, report two minimum-size congruent squares $\sigma$ and $\sigma'$ anchored
at $\corner$ and $\corner'$ and containing $q$ and $q'$, respectively.
\medskip

Putting everything together, we end up with the following theorem.
\medskip

\begin{theorem}\label{thm:2d-k2}
Let $\pointset$ be a set of $n$ points in the plane. For any fixed $\delta>0$,
there is a data structure using $O(n\log^{\delta} n)$ space
that can answer rectilinear $2$-center queries in $O(\log n)$ time.
\end{theorem}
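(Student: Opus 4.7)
The plan is to assemble the query procedure already sketched above and verify the claimed bounds. First I would confirm that the shrinking step runs in $O(\log n)$ time using the extremal-point structure $\ds_1$: four queries yield the leftmost, rightmost, bottommost, and topmost points of $\Sq$, and we replace $\qrange$ by their axis-aligned bounding box. This does not change $\Sq$ but guarantees that each edge of the shrunk $\qrange$ contains a point of $\Sq$.

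Next I would justify the structural claim that every optimal pair of congruent covering squares is anchored at two diagonally opposite corners of the shrunk $\qrange$. Because each edge of $\qrange$ now contains a point of $\Sq$, any two congruent squares whose union covers $\Sq$ must each touch $\qrange$'s boundary; a minimum-size solution must share a corner with $\qrange$, and the standard $\infmetric$-bisector argument forces the two squares to sit at diagonally opposite corners. Thus it suffices to try both diagonals and return the better answer.

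For each diagonal I would compute the $\infmetric$-bisector of the two anchor corners inside $\qrange$ in $O(1)$ time, yielding the partition into $\region$ and $\region'$ and the finer subdivision into $A_1,\ldots,A_4$ (and symmetrically $A'_1,\ldots,A'_4$). The candidate points $\point(A_1)$ and $\point(A_2)$ are axis-extremal points inside axis-aligned subrectangles and can be retrieved from $\ds_1$ in $O(\log n)$ time. The candidate points $\point(A_3)$ and $\point(A_4)$ are extrema inside truncated $45^\circ$ cones; the key observation I would verify is that the relevant extremum inside $A_3$ (resp.\ $A_4$) coincides with the extremum inside the corresponding infinite cone, since any point of $\pointset$ lying in the infinite cone but outside $A_3$ either falls in $\region'$ or is already dominated (with respect to $\infdist(\cdot,\corner)$) by the rectangular candidates $\point(A_1),\point(A_2)$. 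Hence $\ds_2$ returns $\point(A_3),\point(A_4)$ in $O(\log n)$ time, and the $\infmetric$-furthest of the four candidates is the point of $\region\cap\pointset$ farthest from $\corner$; the same procedure on $\region'$ yields the farthest point from $\corner'$, which together determine the optimal anchored square pair for this diagonal.

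The storage is dominated by $\ds_1$, which uses $O(n\log^\delta n)$ for any fixed $\delta>0$, while $\ds_2$ contributes only $O(n)$ by Lemma~\ref{lem:2d-k2-45cone}. The query performs two diagonal orientations, each with $O(1)$ calls of cost $O(\log n)$, for a total of $O(\log n)$. The main obstacle will be nailing down the cone-extension argument above and checking the degenerate configurations of Figure~\ref{fig:L-infty-bisector}(b) and (c): there some $A_j$ become empty, so one must confirm that the corresponding candidate queries either return $\nil$ harmlessly or reduce to a rectangular extremum already computed via $\ds_1$, and that the $\infmetric$-bisector still cleanly partitions $\qrange$ into $\region$ and $\region'$ in those cases.
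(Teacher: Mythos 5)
Your proposal follows the paper's proof essentially step for step: shrink $\qrange$, reduce to the two diagonal anchorings, partition each side of the $\infmetric$-bisector into the subregions $A_1,\ldots,A_4$, retrieve the rectangular extrema from Chazelle's structure $\ds_1$ and the cone extrema from the $45^\circ$-cone Voronoi structure $\ds_2$ of Lemma~\ref{lem:2d-k2-45cone}, and combine in $O(\log n)$ time with $O(n\log^{\delta}n)$ space. The cone-extension and degenerate-case issues you flag are real points to check, but they are resolved exactly as you suggest, so this is the same argument as the paper's.
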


\emph{Remark.}
We note that the query time in Theorem~\ref{thm:2d-k2} can be improved
in the word-RAM model to $\runtime(\log\log n)$ by using the range successor data
structure of Zhou~\cite{zhou-ipl-16}, and the point-location data structure for
orthogonal subdivisions by de Berg~\etal~\cite{bks-jal-95}.

\subsection{3-Center Queries}\label{sec:k3}
Given a (shrunk) query range $\qrange$, we need to compute a set
$\{\sigma,\sigma',\sigma''\}$ of (at most) three congruent squares of minimal size
whose union covers $\Sq$. It is easy to verify (and is well-known) that at least one of
the squares in an optimal solution must be anchored at one of the corners of
$\qrange$.
Hence and w.l.o.g. we assume that $\sigma$ is anchored at one of the corners of $\qrange$.
We try placing $\sigma$ in each corner of $\qrange$ and select the placement
resulting in the best overall solution.
Next we briefly explain how to find the best solution subject to placing $\sigma$
in the leftbottom corner of $\qrange$. The other cases are symmetric.
We perform two separate binary searches; one will test placements of $\sigma$ such that
its right side has the same $x$-coordinate as a point in $\pointset$, the other will be on
possible $y$-coordinates for the top side.
During each of the binary searches, we compute the smallest axis-parallel rectangle
$\qrange'\subseteq\qrange$ containing the points of $\qrange\backslash\sigma$ (by covering
$\qrange\backslash\sigma$ with axis-aligned rectangles and querying for extreme points in
these rectangles).
We then run the algorithm for $k=2$ on $\qrange'$. We need to ensure that this query ignores
the points already covered by $\sigma$. For this, recall that for $k=2$ we covered the regions
$A$ and $A'$ by suitable rectangular and triangular ranges. We can now do the same, but we cover
$A\setminus\sigma$ and $A'\setminus\sigma$ instead.

After the query on $\qrange'$, we compare the size of the resulting squares with the size of
$\sigma$ to guide the binary search.
 The process stops as soon as the three sizes
are the same or no further progress in the binary search can be made.
\medskip

Putting everything together, we end up with the following theorem.
\begin{theorem}\label{thm:2d-k3}
Let $\pointset$ be a set of $n$ points in the plane. For any fixed $\delta>0$,
there is a data structure using $O(n\log^{\delta} n)$ space
that can answer rectilinear $3$-center queries in $O(\log^2 n)$ time.
\end{theorem}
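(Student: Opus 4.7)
The plan is to reduce a rectilinear 3-center query on $\Sq$ to $O(\log n)$ rectilinear 2-center queries of the kind solved by Theorem~\ref{thm:2d-k2}. After shrinking $\qrange$ so that each of its edges touches a point of $\pointset$, a standard argument shows that in any optimal solution $\{\sigma,\sigma',\sigma''\}$ of three congruent squares at least one square must be anchored at a corner of $\qrange$. I would try each of the four corners in turn and return the best outcome; by symmetry it suffices to describe the case in which $\sigma$ is anchored at the bottom-left corner of $\qrange$.

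Once the corner is fixed, the only remaining degree of freedom for $\sigma$ is its side length $s$. Given $s$, the residual problem is a rectilinear 2-center problem on $\Sq\setminus\sigma$, whose optimum I will denote by $\phi(s)$. The 3-center value achieved by this configuration is $\max(s,\phi(s))$; since $s$ is strictly increasing and $\phi(s)$ is monotone non-increasing in $s$, the minimum is attained at the smallest $s$ with $\phi(s)\le s$. In any optimal placement either the right edge or the top edge of $\sigma$ passes through a point of $\pointset$, yielding at most $2n$ candidate values for $s$: $n$ coming from sorted $x$-coordinates and $n$ from sorted $y$-coordinates. I would run one binary search on each sorted list of candidates and return the better of the two results.

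Each step of the binary search consists of (i) computing the bounding rectangle $\qrange'$ of $\Sq\setminus\sigma$, by decomposing the L-shape $\qrange\setminus\sigma$ into two axis-aligned rectangles and querying $\ds_1$ in $O(\log n)$ time, and (ii) running the 2-center procedure of Theorem~\ref{thm:2d-k2} on $\qrange'$, modified so that wherever that procedure queries a subregion $A_j$ or $A_j'$ for an extreme or furthest point, it queries $A_j\setminus\sigma$ (respectively $A_j'\setminus\sigma$) instead. Since $\sigma$ is axis-aligned and anchored at a corner of $\qrange$, each $A_j\setminus\sigma$ decomposes into $O(1)$ axis-aligned rectangles and $45^{\circ}$ cones, so the modified step still runs in $O(\log n)$ time using the unchanged data structures $\ds_1$ and $\ds_2$. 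The binary search is guided by comparing the returned value $\phi(s)$ with $s$. Two searches of $O(\log n)$ steps each at $O(\log n)$ work per step give $O(\log^2 n)$ time per corner, and hence $O(\log^2 n)$ overall. Storage is inherited from the 2-center data structure, yielding the claimed $O(n\log^{\delta} n)$ bound.

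The main obstacle will be verifying that every geometric primitive used by the 2-center algorithm still behaves correctly when the corner square $\sigma$ is carved out: in particular, one must check that the $\infmetric$-bisector inside $\qrange'$ and the induced subregions $A_j, A_j'\setminus\sigma$ retain the decomposition into $O(1)$ rectangles and $45^{\circ}$ cones that $\ds_1$ and $\ds_2$ were designed to handle. A case analysis mirroring the treatment of Figure~\ref{fig:L-infty-bisector}(a)--(c), together with a careful discussion of the thin and square shapes that $\qrange'$ can take as $s$ varies, should suffice to confirm that the per-step cost remains $O(\log n)$, completing the proof.
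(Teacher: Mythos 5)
Your proposal matches the paper's proof essentially step for step: anchoring one square at each corner in turn, running two binary searches over candidate $x$- and $y$-coordinates for the free edge of $\sigma$, computing the bounding box $\qrange'$ of $\Sq\setminus\sigma$ via $O(1)$ rectangle queries, and invoking the 2-center machinery with the regions $A_j\setminus\sigma$, $A'_j\setminus\sigma$ in place of $A_j$, $A'_j$. Your explicit monotonicity argument for $\max(s,\phi(s))$ is if anything slightly more careful than the paper's terse description of how the binary search is guided, so the proposal is correct and takes the same route.
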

\emph{Remark.}
Similar to Theorem~\ref{thm:2d-k2}, the query time in Theorem~\ref{thm:2d-k3} can
be improved in the word-RAM model of computation to $\runtime(\log n\log\log n)$ time.

\section{Discussion}\label{sec:ch4-discussion}
In this paper we presented a general method to preprocess a given point
set $\pointset$ in $\Reals^d$ into a data structure for fast range-clustering
queries on the subset of $\pointset$ that lies inside a given axis-aligned query
box $\qrange$. Our main result is a general method to compute a $(1+\eps)$-approximation
to a range-clustering query, where $\eps>0$ is a parameter that can be specified
as part of the query.
Our method applies to a large class of clustering problems, including $k$-center
clustering in any $\Lp$-metric and a variant of $k$-center clustering where the
goal is to minimize the sum (instead of maximum) of the cluster sizes.
We also extended our method to deal with capacitated $k$-clustering problems, where
each cluster should contain at most a given number of points.
For the special cases of rectilinear $k$-center clustering in $\Reals^1$ and in
$\Reals^2$ for $k=2$ or~3, we described data structures that answer range-clustering
queries exactly.
\medskip

We close the paper by stating the following open questions.
\begin{itemize}
  \item Can the bound in Theorem~\ref{thm:noncapacitated} (and the bounds in Corollary~\ref{cor:noncapacitated})
        be improved?
  \item Is it possible to design efficient exact data structures for rectilinear $k$-center queries
        when $k>3$?
  \item Can any of the data structures presented in this paper be made dynamic?
  \item Is it possible to extend our results on approximate queries to non-regular cost functions
  (for example, for the $k$-means problem)?
\end{itemize}
\medskip

\textbf{Acknowledgements. }
This research was initiated when the first author visited
the Department of Computer Science
at TU Eindhoven during the winter 2015--2016.
He wishes to express his
gratitude to the other authors and the department for their
hospitality.
The last author wishes to thank Timothy Chan for valuable
discussions about the problems studied in this paper.


\begin{thebibliography}{10}

\bibitem{acfs-pspd-13}
M.~Abam, P.~Carmi, M.~Farshi, and M.~Smid.
\newblock On the power of the semi-separated pair decomposition.
\newblock {\em Compututational Geometry: Theory and Applications}, 46:631--639,
  2013.

\bibitem{abs-2cptd-13}
P.~K. Agarwal, R.~B. Avraham, and M.~Sharir.
\newblock The 2-center problem in three dimensions.
\newblock {\em Compututational Geometry: Theory and Applications}, 46:734--746,
  2013.

\bibitem{ap-algo-02}
P.~K. Agarwal and C.~M. Procopiuc.
\newblock Exact and approximation algorithms for clustering.
\newblock {\em Algorithmica}, 33:201--226, 2002.

\bibitem{amp-socg-15}
S.~Arya, D.~M. Mount, and E.~Park.
\newblock Approximate geometric {MST} range queries.
\newblock In {\em Proc. 36th International Symposium on Computational Geometry
  (SoCG)}, pages 781--795, 2015.

\bibitem{bkssv-cats-13}
P.~Brass, C.~Knauer, C.~Shin, M.~H.~M. Smid, and I.~Vigan.
\newblock Range-aggregate queries for geometric extent problems.
\newblock In {\em Computing: The Australasian Theory Symposium 2013, CATS'13},
  pages 3--10, 2013.

\bibitem{crw-91}
V.~Capoyleas, G.~Rote, and G.~Woeginger.
\newblock Geometric clusterings.
\newblock {\em Journal of Algorithms}, 12:341--356, 1991.

\bibitem{c-garot-99}
T.~M. Chan.
\newblock Geometric applications of a randomized optimization technique.
\newblock {\em Discrete \& Compututational Geometry}, 22:547--567, 1999.

\bibitem{c-mptwa-99}
T.~M. Chan.
\newblock More planar two-center algorithms.
\newblock {\em Compututational Geometry: Theory and Applications}, 13:189--198,
  1999.

\bibitem{c-fads-88}
B.~Chazelle.
\newblock A functional approach to data structures and its use in
  multidimensional searching.
\newblock {\em SIAM Journal on Computing}, 17:427--462, 1988.

\bibitem{dgks-rl1m-14}
A.~W. Das, P.~Gupta, K.~Kothapalli, and K.~Srinathan.
\newblock On reporting the {$L_1$}-metric closest pair in a query rectangle.
\newblock {\em Information Processing Letters}, 114:256--263, 2014.

\bibitem{bks-jal-95}
M.~de~Berg, M.~van Kreveld, and J.~Snoeyink.
\newblock Two- and three-dimensional point location in rectangular
  subdivisions.
\newblock {\em Journal of Algorithms}, 18:256--277, 1995.

\bibitem{e-fcptc-97}
D.~Eppstein.
\newblock Faster construction of planar two-centers.
\newblock In {\em Proc. 8th Annual ACM-SIAM Symposiun on Discrete Algorithms
  (SODA)}, pages 131--138, 1997.

\bibitem{gjks-dsrae-14}
P.~Gupta, R.~Janardan, Y.~Kumar, and M.~Smid.
\newblock Data structures for range-aggregate extent queries.
\newblock {\em Compututational Geometry: Theory and Applications}, 47:329--347,
  2014.

\bibitem{h-11}
S.~Har-Peled.
\newblock {\em Geometric {A}pproximation {A}lgorithms}, volume 173 of {\em
  Mathematical surveys and monographs}.
\newblock American Mathematical Society, 2011.

\bibitem{hm-cc-stoc04}
S.~Har-Peled and S.~Mazumdar.
\newblock On coresets for $k$-means and $k$-median clustering.
\newblock In {\em Proc. 36th Annual ACM Symposium on Theory of Computing
  (STOC)}, pages 291--300, 2004.

\bibitem{h-slt-05}
M.~Hoffmann.
\newblock A simple linear algorithm for computing rectilinear 3-centers.
\newblock {\em Compututational Geometry: Theory and Applications}, 31:150--165,
  2005.

\bibitem{hcl-gss-93}
R.~Z. Hwang, R.~Lee, and R.~C. Chang.
\newblock The generalized searching over separators strategy to solve some
  {NP}-hard problems in subexponential time.
\newblock {\em Algorithmica}, 9:398--423, 1993.

\bibitem{kams-ibsed-14}
S.~Khare, J.~Agarwal, N.~Moidu, and K.~Srinathan.
\newblock Improved bounds for smallest enclosing disk range queries.
\newblock In {\em Proc. 26th Canadian Conference on Computational Geometry
  (CCCG)}, 2014.

\bibitem{ls-updsrr-94}
H.~P. Lenhof and M.~H.~M. Smid.
\newblock Using persistent data structures for adding range restrictions to
  searching problems.
\newblock {\em Theoretical Informatics and Applications}, 28:25--49, 1994.

\bibitem{ns-coreset-cccg10}
Y.~Nekrich and M.~H.~M. Smid.
\newblock Approximating range-aggregate queries using coresets.
\newblock In {\em Proc. 22nd Canadian Conference on Computational Geometry
  (CCCG)}, pages 253--256, 2010.

\bibitem{p-aeat-08}
J.~M. Phillips.
\newblock Algorithms for $\varepsilon$-approximations of terrains.
\newblock In {\em Proc. 35th International Colloquium on Automata, Languages,
  and Programming (ICALP)}, pages 447--458, 2008.

\bibitem{s-nltp2c-97}
M.~Sharir.
\newblock A near-linear time algorithm for the planar 2-center problem.
\newblock {\em Discrete \& Compututational Geometry}, 18:125--134, 1997.

\bibitem{sw-rppp-96}
M.~Sharir and E.~Welzl.
\newblock Rectilinear and polygonal $p$-piercing and $p$-center problems.
\newblock In {\em Proc. 12th International Symposium on Computational Geometry
  (SoCG)}, pages 122--132, 1996.

\bibitem{zhou-ipl-16}
G.~Zhou.
\newblock Two-dimensional range successor in optimal time and almost linear
  space.
\newblock {\em Information Processing Letters}, 116:171--174, 2016.

\end{thebibliography}
\newcommand{\SortNoop}[1]{}

\end{document}